%% file: paper.tex
\documentclass[reprint,amsmath,amssymb,aps,pra,nofootinbib]{revtex4-2}

\usepackage[T1]{fontenc}
\usepackage[utf8]{inputenc}
\usepackage[english]{babel}
\usepackage[margin=1.8cm]{geometry}
\usepackage{graphicx}
\usepackage{ragged2e}
\usepackage{booktabs}
\usepackage{amsthm}
\usepackage{fancyhdr}
\usepackage{titlesec}
\usepackage[unicode]{hyperref}
\usepackage{orcidlink}
\usepackage{todonotes}

\renewcommand{\leq}{\leqslant}
\renewcommand{\geq}{\geqslant}
\newcommand{\deq}[1]{\mathrel{\smash[t]{\overset{#1}{=}}}}
\DeclareMathOperator{\Rad}{Rad}
\DeclareMathOperator{\GL}{GL}
\DeclareMathOperator{\RM}{RM}
\DeclareMathOperator{\GF}{GF}
\DeclareMathOperator{\wt}{wt}
\DeclareMathOperator{\St}{St}
\DeclareMathOperator{\arank}{arank}
\newcommand{\F}{\mathbb{F}}

\usepackage{titlesec}
\makeatletter
\renewcommand\thesection{\arabic{section}}
\makeatother
\titleformat{\section}[block]{\normalfont\normalsize\scshape\centering}{\thesection.}{0.75em}{}
\titlespacing*{\section}{0pt}{\baselineskip}{0.5\baselineskip}
\titlespacing*{\subsection}{0pt}{\baselineskip}{0.25\baselineskip}

\fancypagestyle{plain}{
  \fancyhf{}
  \fancyfoot[C]{\thepage}

}

\theoremstyle{plain}
\newtheorem{theorem}{Theorem}
\newtheorem{lemma}{Lemma}

\begin{document}

\setlength{\abovedisplayskip}{3pt}
\setlength{\abovedisplayshortskip}{3pt}
\setlength{\belowdisplayskip}{3pt}
\setlength{\belowdisplayshortskip}{3pt}
\setlength{\headheight}{13pt}
\setlength{\parskip}{0pt}
\emergencystretch=1em

\title{The Weight Distribution of the Third-Order Reed--Muller Code of Length 2048}

\author{
Kirill Khoruzhii\orcidlink{0000-0003-4689-3812}$^{1,*}$,
Patrick Gel\ss\orcidlink{0000-0002-3645-9513}$^{1}$,
Sebastian Pokutta\orcidlink{0000-0001-7365-3000}$^{1,2}$
}
\affiliation{
$^1$Zuse Institute Berlin, Berlin, Germany\\
$^2$Technische Universit\"at Berlin, Germany
}

\begin{abstract}
We compute the weight distribution of the third-order Reed--Muller code $\RM(3,11)$ of length $2048$. The weight enumerator is assembled from the coset weight enumerators of $f+\RM(2,10)$, evaluated for representatives of all $3\,691\,560$ nonzero $\GL(10,2)$-orbits of Boolean cubic forms in ten variables. The computation rests on a structural theorem: a nondegenerate Boolean cubic form admits a nondegenerate hyperplane restriction, except for a single orbit in each odd dimension. The same pass determines the second-order nonlinearity of every cubic form: the relative covering radius of $\RM(2,10)$ in $\RM(3,10)$ is $408$, attained on $179$ orbits. This raises the best known lower bound on the covering radius of $\RM(2,10)$ from $400$ to $408$. A complementary heuristic search shows that the relative covering radius of $\RM(6,10)$ in $\RM(7,10)$ is at most $32$, improving the previous bound of $50$.
\end{abstract}

\maketitle
\thispagestyle{fancy}

\begingroup
\renewcommand\thefootnote{\fnsymbol{footnote}}
\footnotetext[1]{khoruzhii@zib.de}
\endgroup

\section{Introduction}

The weight distributions of the Reed--Muller codes $\RM(r,m)$~\cite{muller_1954,reed_1954} are a classical subject of coding theory, yet they are known in closed form only for orders $r\leq 2$~\cite{sloane_1970} and, through the MacWilliams identities, for the dual orders $r\geq m-3$ (see~\cite{abbe_2021} for a survey). Beyond these layers, general results cover only the low-weight range~\cite{kasami_1970}, and complete distributions are known only from individual computations. For the third-order codes the known cases form a chain of length-doubling steps: $\RM(3,7)$~\cite{sugino_1971}, $\RM(3,8)$~\cite{tilborg_1971,sarwate_1973}, $\RM(3,9)$~\cite{sugita_1996}, and $\RM(3,10)$~\cite{brier_2003}; recently the fourth-order code $\RM(4,9)$ of length $512$ was enumerated~\cite{markov_2025}. This paper computes the next step of the chain, the weight distribution of the third-order Reed--Muller code $\RM(3,11)$ of length $2048$ (Tab.~\ref{tab:enum}).

The route is the Sarwate-type recursion~\cite{sarwate_1973,markov_2025}: the weight enumerator of $\RM(3,m+1)$ is the sum of squared coset weight enumerators of $f+\RM(2,m)$ over all cubic forms $f$, and since a coset enumerator depends only on the $\GL(m,2)$-orbit of $f$, the sum collapses to orbit representatives weighted by orbit sizes (Sec.~\ref{sec:rm-codes}). Every computation in the chain therefore rests on a classification one dimension lower: Hou classified the cubic forms for $m\leq 8$~\cite{hou_1996}, the case $m=9$ was classified in~\cite{brier_2003,hora_2021}, and the fourth-order computation~\cite{markov_2025} uses the classifications of quartic forms in eight variables and of the cosets of $\RM(2,7)$ in $\RM(4,7)$~\cite{gillot_2023}. For $m=10$ the required input became available only recently: the classification of Boolean cubic forms in ten variables~\cite{khoruzhii_2026b}, which provides representatives and stabilizer orders for all $3\,691\,560$ nonzero $\GL(10,2)$-orbits.

The classification alone does not make the computation feasible; the concluding remarks of~\cite{markov_2025} name the per-coset cost as the principal obstacle to any progress beyond length $512$. The split formula of Brier and Langevin~\cite{brier_2003} evaluates one coset enumerator by an outer enumeration over the homogeneous quadratic forms one dimension lower, $2^{36}$ forms for $m=10$, which is out of reach at the scale of the catalog. The second ingredient of this paper is a structural theorem: every nondegenerate Boolean cubic form admits a nondegenerate hyperplane restriction, except for one orbit in each odd dimension (Thm.~\ref{thm:nondegenerate-restriction}). Splitting along such a restriction saves a factor of $2^m$ in the enumeration, $2^{26}$ instead of $2^{36}$ cosets per orbit at $m=10$. With this speedup the full catalog pass takes about $65$ CPU-years and yields the coset weight enumerator of every orbit (Sec.~\ref{sec:coset-weight-enum}).

The lowest nonzero coefficient of the coset enumerator of $f$ is its second-order nonlinearity $d_2(f)$, a long-studied quantity in the covering-radius literature~\cite{hou_1993} and in cryptographic Boolean function analysis, and more recently a measure of Clifford approximability in fault-tolerant quantum compilation~\cite{khoruzhii_2026a}. Its worst case over cubic forms is the relative covering radius $\rho_{2,3}(m)$ of $\RM(2,m)$ in $\RM(3,m)$: the value $\rho_{2,3}(6)=18$ goes back to Schatz~\cite{schatz_1981}, the values $40$ and $88$ for $m=7,8$ follow from the classification in at most eight variables~\cite{hou_1996}, and $\rho_{2,3}(9)=196$ was computed in~\cite{brier_2003}. The enumerator pass determines $d_2$ on every orbit and hence $\rho_{2,3}(10)=408$, attained on $179$ orbits, listed in Tab.~\ref{tab:rep}. This also raises the best known lower bound on the covering radius of $\RM(2,10)$ from $400$~\cite{fourquet_2008} to $408$ (Sec.~\ref{sec:discussion}).

Two further results accompany the computation. We prove a bound (Thm.~\ref{thm:arank-d2-bound}) for cubic forms of alternating rank $r$, which forces rank at least $6$ on the extremal orbits and describes the correlation between Clifford approximability and non-Clifford resource cost~\cite{khoruzhii_2026a,khoruzhii_2026b,amy_2019}. We develop a local-search heuristic that produces explicit upper-bound certificates, one correction per orbit, reproducing $\rho_{2,3}(10)\leq 408$ at roughly a thousandth of the cost of the enumerator pass; applied to degree-seven forms, it yields $\rho_{6,7}(10)\leq 32$, improving the bound of $50$~\cite{dougherty_2022}.

\section{Reed--Muller Codes} \label{sec:rm-codes}

For $0\leq r\leq m$, the Reed--Muller code $\RM(r,m)$ consists of the Boolean polynomials of degree at most $r$ in $m$ variables, reduced modulo $x_j^2=x_j$ and read as functions $\F_2^m\to\F_2$. Listing its $2^m$ values turns a Boolean function into a vector in $\F_2^{2^m}$, so $\RM(r,m)$ is a linear code, with the lower-order codes as nested subcodes (Fig.~\ref{fig:1}a). The weight of a Boolean function counts its support,
\begin{equation}
    \wt(f)=|\{x\in\F_2^m: f(x)=1\}|,
\end{equation}
and its distance to $\RM(r,m)$,
\begin{equation*}
    d_r(f)=\min_{q\in\RM(r,m)}\wt(f+q),
\end{equation*}
is the $r$-th order nonlinearity of $f$; the case $r=2$ gives the second-order nonlinearity $d_2$. The worst case of $d_r$ over the next-order code is the relative covering radius of $\RM(r,m)$ in $\RM(r+1,m)$,
\begin{equation*}
    \rho_{r,r+1}(m)=\max_{f\in\RM(r+1,m)} d_r(f).
\end{equation*}
Maximizing over all Boolean functions instead gives the covering radius $\rho_r(m) \geq \rho_{r,r+1}(m)$ of $\RM(r,m)$.
The main covering-radius target of this paper is $\rho_{2,3}(10)$; the cocubic value $\rho_{6,7}(10)$ appears in Sec.~\ref{sec:heuristic-upper-bound}.

Adding any quadratic to $f$ leaves $d_2(f)$ unchanged, so $d_2$ descends to the quotient $\RM^*(3,m)=\RM(3,m)/\RM(2,m)$, identified with the homogeneous cubic forms. Two cubic forms are $\GL(m,2)$-equivalent when
\begin{equation*}
    f_1\sim f_2 \ \Leftrightarrow\ \exists A\in\GL(m,2): f_1(x)\deq{3} f_2(Ax),
\end{equation*}
where $g\deq{k}h$ denotes equality of the degree-$k$ homogeneous parts. Weight is $\GL(m,2)$-invariant, so $d_2$ is constant on orbits. For $m=10$ there are $3\,691\,560$ nonzero orbits, cataloged in~\cite{khoruzhii_2026b}, whose representatives we use throughout. The stabilizer of a form is
\begin{equation*}
    \St(f)=\{A\in\GL(m,2): f(Ax)\deq{3}f(x)\},
\end{equation*}
and its orbit has size $|\GL(m,2)|/|\St(f)|$. We write representatives in the monomial notation of~\cite{khoruzhii_2026b}: a string $ijk$ denotes $x_ix_jx_k$ and strings are summed over $\F_2$, so for instance $025+034$ means $x_0x_2x_5+x_0x_3x_4$.

For $u\in\F_2^m$ the finite difference $\Delta_u f(x)=f(x+u)+f(x)$ lowers the degree by one. The radical
\begin{equation*}
    \Rad(f)=\{u\in\F_2^m:\Delta_u f\deq{2}0\}
\end{equation*}
collects the directions along which $f$ reduces. A cubic form is nondegenerate when $\Rad(f)=0$. 

The weight distribution of $\RM(3,m+1)$, recorded by
\begin{equation*}
    W_{\RM(3,m+1)}(z)=\sum_{f\in\RM(3,m+1)} z^{\wt(f)},
\end{equation*}
is the main quantity of this paper. Following~\cite{sarwate_1973,brier_2003}, it is determined by the coset weight enumerators of the $\GL(m,2)$-orbits in $\RM^*(3,m)$. The coset weight enumerator of a Boolean function $f$ is
\begin{equation}
    W_f(z)=\sum_{q\in\RM(2,m)} z^{\wt(f+q)},
    \label{eq:coset-enum}
\end{equation}
and its lowest nonzero exponent is $d_2(f)$. Weights are $\GL(m,2)$-invariant and $W_f$ is unchanged by adding a quadratic to $f$, so $W_f$ depends only on the $\GL(m,2)$-orbit of $f$. The reduction then reads
\begin{equation}
    W_{\RM(3,m+1)}(z)=\sum_{f}\frac{|\GL(m,2)|}{|\St(f)|}\,W_f(z)^2,
    \label{eq:wd-reduction}
\end{equation}
summed over orbit representatives $f$, with the zero orbit contributing $W_{\RM(2,m)}(z)^2$. For $m=10$ the coset enumerators of the $3\,691\,561$ orbits thus give the complete $\RM(3,11)$ weight distribution (Tab.~\ref{tab:enum}). 

\section{Coset Weight Enumerators} \label{sec:coset-weight-enum}

In this section we evaluate the coset weight enumerator~\eqref{eq:coset-enum} for every nonzero orbit representative of the catalog~\cite{khoruzhii_2026b}. The zero orbit contributes the classical weight distribution of $\RM(2,10)$~\cite{sloane_1970}, so by~\eqref{eq:wd-reduction} this data assembles into the weight distribution of $\RM(3,11)$. The lowest nonzero exponent of each enumerator is the value of $d_2$ on its orbit, so the same pass also determines $\rho_{2,3}(10)$ and the extremal orbits.

The enumerator of a single orbit reduces to affine coset enumerators. Fix a linear coordinate $t$ and write $x=(t,y)$ with $y\in\F_2^{m-1}$. The cubic form has a unique split
\begin{equation*}
    f(t,y)=g(y)+tp(y),
\end{equation*}
where $g\in\RM^*(3,m-1)$ is the restriction of $f$ to the hyperplane $t=0$ and $p\in\RM^*(2,m-1)$. A quadratic correction $q\in\RM(2,m)$ is determined by a homogeneous quadratic form $h\in\RM^*(2,m-1)$ and two affine corrections $a_0,a_1\in\RM(1,m-1)$ on the two slices,
\begin{equation*}
    f(t,y)+q(t,y)=\left\{\begin{aligned}
        &g(y)+h(y)+a_0(y), & t&=0,\\
        &g(y)+p(y)+h(y)+a_1(y), & t&=1,
    \end{aligned}\right.
\end{equation*}
and conversely every triple $(h,a_0,a_1)$ arises from a unique $q$. Thus, if
\begin{equation*}
    A_u(z)=\sum_{a\in\RM(1,m-1)} z^{\wt(u+a)}
\end{equation*}
is the affine coset enumerator of a Boolean function $u$ on $\F_2^{m-1}$, then the split gives~\cite{brier_2003}
\begin{equation}
    W_f(z)=\sum_{h\in\RM^*(2,m-1)} A_{g+h}(z)\,A_{g+p+h}(z).
    \label{eq:split-weight-enumerator}
\end{equation}
The weights $\wt(u+a)$ over an affine coset are read off the Walsh spectrum of $u$, so each factor is computed by a fast Walsh--Hadamard transform on $2^{m-1}$ points. For $m=10$, the split reduces the outer enumeration from the $|\RM(2,10)|=2^{56}$ corrections $q$ to the $|\RM^*(2,9)|=2^{36}$ forms $h$. Whenever the lowest nonzero bin of $W_f$ is attained, we keep a triple $(h,a_0,a_1)$ attaining it, which reconstructs an explicit quadratic correction $q$ with $\wt(f+q)=d_2(f)$.

The product in~\eqref{eq:split-weight-enumerator} is unchanged under two translations of $h$. First, replacing $h$ by $h+p$ swaps the two factors. Second, write $\Delta_i u=u(y+e_i)+u(y)$ for a coordinate direction $e_i$ of the $y$-space; since $A_u$ is invariant under translations of $y$ and under adding affine functions, replacing $h$ by $h+\Delta_i g$ leaves both factors unchanged. Therefore the product is constant on the cosets of
\begin{equation*}
    V_f=\langle p,\Delta_0 g,\ldots,\Delta_{m-2}g\rangle\subseteq\RM^*(2,m-1),
\end{equation*}
and~\eqref{eq:split-weight-enumerator} factors as
\begin{equation}
    W_f(z)=|V_f|\sum_{h\in\RM^*(2,m-1)/V_f} A_{g+h}(z)\,A_{g+p+h}(z),
    \label{eq:quotient-enumerator}
\end{equation}
with one representative $h$ chosen from each coset of $V_f$.

The size of the quotient is controlled by the radical of the slice: for a nondegenerate form $f$, every split satisfies $\dim V_f=m-\dim\Rad(g)$ (App.~\ref{app:nondegenerate-restrictions}), so a split with nondegenerate $g$ has $\dim V_f=m$ and saves a factor $2^m$ in the enumeration. The following theorem guarantees that such a split exists.

\begin{theorem} \label{thm:nondegenerate-restriction}
Every nondegenerate Boolean cubic form on $\F_2^m$ has a nondegenerate hyperplane restriction, except for one orbit in each odd dimension: the forms equivalent to $f_*(x)=x_0(x_1x_2+x_3x_4+\ldots+x_{m-2}x_{m-1})$ have no nondegenerate hyperplane restriction.
\end{theorem}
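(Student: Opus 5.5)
We work throughout with the alternating trilinear form $T$ attached to $f$: modulo quadratics, $f$ is determined by $T(u,v,w)=\Delta_u\Delta_v\Delta_w f$. For $u\in\F_2^m$ let $B_u=T(u,\cdot,\cdot)$, an alternating bilinear form. Then $\Rad(f)=\{u:B_u=0\}$, so nondegeneracy of $f$ means $B_u\neq 0$ for every $u\neq 0$.

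\emph{Step 1: reformulate hyperplane degeneracy.} Let $H=\ker\lambda$ with $\lambda\neq0$ in the dual space. A vector $u\in H$ lies in $\Rad(f|_H)$ exactly when $B_u$ vanishes on $H\times H$. An alternating form vanishing on a hyperplane has the shape $\lambda\wedge\mu$. So $f|_H$ is degenerate iff some nonzero $u$ has $B_u$ of rank $2$ with $\lambda$ in its image plane $P_u=\{B_u(v,\cdot)\}$. The condition $u\in H$ is then automatic, because $B_u(u,\cdot)=0$ means every form in $P_u$ vanishes on $u$. Let $S$ be the set of $u$ with $\operatorname{rank}B_u=2$. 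Then:
\begin{center}
every hyperplane restriction is degenerate $\iff$ $\bigcup_{u\in S}P_u$ covers all nonzero linear forms.
\end{center}
Each plane covers only $3$ forms, so this forces $|S|\geq(2^m-1)/3$.

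\emph{Step 2: check that $f_*$ is exceptional.} For $f_*$ with $\omega=x_1x_2+\dots+x_{m-2}x_{m-1}$ and $u\in\langle e_1,\dots,e_{m-1}\rangle$, one computes $B_u=x_0\wedge\omega(u,\cdot)$, so $P_u=\langle x_0,\omega(u,\cdot)\rangle$. For $m$ odd, $\omega$ is nondegenerate on $m-1$ coordinates. Hence the forms $\omega(u,\cdot)$ run over all forms not involving $x_0$, and together with $x_0$ the planes $P_u$ cover everything. Nondegeneracy of $f_*$ itself is immediate. For $m$ even no form $x_0 Q$ is nondegenerate, since $Q$ would need an odd-dimensional nondegenerate symplectic part. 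This explains the parity restriction.

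\emph{Step 3: force the shape $\lambda_0\cdot Q$.} This is the heart of the proof and the step I expect to be the main obstacle. I would use the symmetry $B_u(v,\cdot)=B_v(u,\cdot)$ for $u,v\in S$: this form lies in $P_u\cap P_v$. Hence whenever $P_u\cap P_v=0$ we get $T(u,v,\cdot)=0$. Also, $B_{u+v}=B_u+B_v$ has rank $\leq 4$, and its rank equals $2$ only if $P_u\cap P_v\neq 0$.

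The aim is to show that the planes $P_u$, $u\in S$, all pass through one common form $\lambda_0$. First I would try to show that $\langle S\rangle$ is a subspace of codimension at most $1$ on which $u\mapsto P_u$ has pairwise intersecting images. A family of $2$-planes pairwise meeting nontrivially either shares a common line or lies in a common $3$-space. The $3$-space alternative covers only $7$ forms, which contradicts the covering property once $m\geq 4$; small $m$ I would check by hand. The delicate part is excluding configurations where $S$ splits into several families with disjoint planes. There the vanishing $T(u,v,\cdot)=0$ across families should make $T$ decompose as a direct sum. I would then argue that a direct sum whose summands each cover only a proper part of the dual space cannot cover all of it, which contradicts the covering condition of Step 1.

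\emph{Step 4: identify the orbit.} Once a common $\lambda_0=x_0$ is found, $B_u\in x_0\wedge V^*$ for all $u\in S$. The counting bound lets one extend this from $S$ to a hyperplane of $u$'s, giving $T=x_0\wedge\beta$ for some alternating $\beta$, that is, $f\deq{3}x_0Q(x_1,\dots,x_{m-1})$. Nondegeneracy of $f$ forces $\beta$ to be nondegenerate on the complement of $e_0$. This requires $m-1$ to be even, and by the classification of alternating forms, $f\sim f_*$.
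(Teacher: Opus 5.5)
Your Step 1 is the covering property the paper uses. Your Step 2 check of $f_*$, via the planes $P_u=\langle x_0,\omega(u,\cdot)\rangle$, is a valid alternative to the paper's direct computation of the restrictions $x_0\,\omega|_L$ with $\dim L$ odd.

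The genuine gap is Step 3. You never prove that the planes $P_u$, $u\in S$, pairwise intersect, and the route you sketch does not get there. Showing that $\langle S\rangle$ has codimension at most one is not needed. The ``several families'' picture also has no basis: disjointness of planes is not an equivalence relation, so it does not partition $S$. Moreover, $T(u,v,\cdot)=0$ holding for some pairs does not split $T$ as a direct sum. As written, Step 3 is a plan rather than an argument, and everything after it depends on it.

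The missing idea is short and uses only facts you already stated.
\begin{itemize}
\item Suppose $P_u\cap P_v=0$. Then $B_u(v,\cdot)=0$. Writing $B_u=\alpha\wedge\beta$, this says $\alpha(v)\beta+\beta(v)\alpha=0$, so every form in $P_u$ vanishes at $v$.
\item Symmetrically, every form in $P_v$ vanishes at $u$. Together with $B_u(u,\cdot)=B_v(v,\cdot)=0$, every form in $P_u+P_v$ vanishes at both $u$ and $v$.
\item Since $u\neq v$ are nonzero, choose $\eta$ with $\eta(u)=\eta(v)=1$. By covering, $\eta\in P_w$ for some $w\in S$.
\item The same reasoning shows $P_w$ cannot be disjoint from $P_u$ (else $\eta(u)=0$) or from $P_v$. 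So $P_w$ meets them in two distinct lines, hence $P_w\subseteq P_u+P_v$ and $\eta(u)=0$, a contradiction.
\end{itemize}
This is the paper's argument. With it, your dichotomy (common line versus common $3$-space, the latter excluded for $m\geq 4$) completes Step 3.

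In Step 4, ``counting'' should be replaced by a spanning argument. The bound $|S|\geq(2^m-1)/3$ alone does not give a hyperplane. Instead, covering forces the $\mu_u$ in $B_u=x_0\wedge\mu_u$ to span the dual space modulo $x_0$. Hence the corresponding $u$, which all lie in $\ker x_0$, span $\ker x_0$. Linearity of $u\mapsto B_u$ then gives $T|_{\ker x_0}=0$, that is, $f\deq{3}x_0Q$.
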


The proof is given in the Appendix. The dimension $m=10$ is even, so there is no exception, and the evaluation of~\eqref{eq:quotient-enumerator} runs over $2^{26}$ coset representatives instead of $2^{36}$ forms for every nondegenerate orbit; without this reduction the full catalog pass would be out of reach. Degenerate forms need no separate treatment: the $348$ degenerate nonzero orbits, lifted from the orbits in dimension at most $9$, are processed by the same computation with a smaller $V_f$ and a correspondingly larger quotient.

We evaluated~\eqref{eq:quotient-enumerator} for all $3\,691\,560$ nonzero orbit representatives. All timings in this paper are summed single-thread CPU times measured on $48$-core Intel Xeon Gold 6246 nodes. The pass is trivially parallel across orbits and took $64.7$ CPU-years, averaging $9.2$ CPU-minutes per orbit; the degenerate orbits account for only $126$ CPU-hours of this total. Two checks support the result: the same computation one dimension lower reproduces the known weight distribution of $\RM(3,10)$~\cite{brier_2003}, and the coefficients of the assembled enumerator sum to $2^{232}=|\RM(3,11)|$. Combining the enumerators via~\eqref{eq:wd-reduction}, with the orbit sizes $|\GL(10,2)|/|\St(f)|$ taken from the catalog~\cite{khoruzhii_2026b} and the zero orbit contributing $W_{\RM(2,10)}(z)^2$, gives the main result.

\begin{theorem}
The weight distribution of the third-order Reed--Muller code of length $2048$ is as given in Tab.~\ref{tab:enum}.
\end{theorem}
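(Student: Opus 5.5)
The plan is to verify the two ingredients of the computation and then check that they assemble correctly. The first ingredient is the reduction~\eqref{eq:wd-reduction}. I would prove it via the $(u,u+v)$ structure of Reed--Muller codes, writing $\RM(3,m+1)=\{(u,u+v):u\in\RM(3,m),\,v\in\RM(2,m)\}$. This representation is not directly a square, so instead I would split each $F\in\RM(3,m+1)$ along the new coordinate as $F=(F_0,F_1)$ with $F_0,F_1\in\RM(3,m)$. The key observation is that $F_0$ and $F_1$ have the same cubic part $f$ modulo $\RM(2,m)$, since their difference is a quadratic. The two halves can then be chosen independently within the coset $f+\RM(2,m)$. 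Hence
\begin{equation*}
W_{\RM(3,m+1)}=\sum_{f\in\RM^*(3,m)}W_f^2.
\end{equation*}
Because $W_f$ is constant on $\GL(m,2)$-orbits, grouping terms by orbit with multiplicities $|\GL(m,2)|/|\St(f)|$ gives~\eqref{eq:wd-reduction}.

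The second ingredient is the correctness of each coset enumerator. The split formula~\eqref{eq:split-weight-enumerator} follows from the bijection $q\leftrightarrow(h,a_0,a_1)$ stated in Sec.~\ref{sec:coset-weight-enum}. The quotient form~\eqref{eq:quotient-enumerator} follows from the two invariances: swapping $h\mapsto h+p$, and $h\mapsto h+\Delta_i g$, where the latter uses that $A_u$ is invariant under translating $y$ and adding affine functions. For every nondegenerate orbit in dimension $10$, Thm.~\ref{thm:nondegenerate-restriction} supplies a coordinate $t$ with nondegenerate slice $g$; no odd-dimensional exception arises since $m=10$ is even. So $|V_f|=2^{10}$ by the appendix dimension count, and the sum ranges over $2^{26}$ representatives. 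The $348$ degenerate orbits use the same formula with the smaller $V_f$ computed directly.

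The remaining step is the computation itself and its certification. I would evaluate the Walsh spectra and products exactly in integer arithmetic for all $3\,691\,560$ representatives, with catalog completeness taken from~\cite{khoruzhii_2026b}. Adding the zero-orbit term $W_{\RM(2,10)}^2$ and summing then gives Tab.~\ref{tab:enum}. As consistency checks, the total coefficient sum must equal $2^{232}$; the orbit sizes must sum to $2^{120}$; running the same code at $m=9$ must reproduce the known $\RM(3,10)$ distribution; and the resulting enumerator should satisfy the MacWilliams identities against $\RM(7,11)$. The last check is a genuine test, because the low-weight coefficients of the dual are known from Kasami's results. I expect the main obstacle to be the reliability of the $65$ CPU-year computation rather than any conceptual gap. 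It is best guarded by these independent checks together with the per-orbit minimum-weight witnesses $(h,a_0,a_1)$ recorded during the pass.
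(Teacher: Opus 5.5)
Your proposal is correct and follows the paper's route. You use the reduction~\eqref{eq:wd-reduction}, which you justify directly by splitting $\RM(3,m+1)$ along the new coordinate, whereas the paper cites Sarwate and Brier--Langevin; the split and quotient enumerators, with Thm.~\ref{thm:nondegenerate-restriction} giving $|V_f|=2^{10}$ on every nondegenerate orbit; and a full catalog pass checked by the $2^{232}$ count and by reproducing $\RM(3,10)$. Your extra MacWilliams check against $\RM(7,11)$ is a more stringent test of the final table than the paper reports, but it does not change the argument.
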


The lowest nonzero bin of each coset enumerator is the value of $d_2(f)$ on its orbit. The maximum over the catalog is $408$, attained on $179$ orbits; their representatives are listed in Tab.~\ref{tab:rep} together with the kissing number
\begin{equation*}
    K=|\{q\in\RM(2,10):\wt(f+q)=408\}|.
\end{equation*}
All $179$ forms are nondegenerate, so $|V_f|=2^{10}$ and every $K$ is divisible by $2^{10}$; the table lists $K/2^{10}$.

\begin{theorem}
The relative covering radius of $\RM(2,10)$ in $\RM(3,10)$ is $408$.
\end{theorem}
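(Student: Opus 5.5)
The statement amounts to two inequalities, $\rho_{2,3}(10)\leq 408$ and $\rho_{2,3}(10)\geq 408$. Both follow from the orbit catalog of~\cite{khoruzhii_2026b} together with the enumerator pass just described. Since $d_2$ is constant on $\GL(10,2)$-orbits and invariant under adding quadratics, the maximum of $d_2$ over $\RM(3,10)$ equals the maximum over the $3\,691\,561$ orbit representatives of $\RM^*(3,10)$, the zero orbit included, which contributes $d_2=0$. This reduction uses only the invariance already noted in Sec.~\ref{sec:rm-codes} and the completeness of the catalog. Completeness can be checked independently: the orbit sizes $|\GL(10,2)|/|\St(f)|$ sum to $2^{\binom{10}{3}}=2^{120}$.

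For each representative $f$, the quantity $d_2(f)$ is the lowest nonzero exponent of $W_f$. Here one must make sure the zero bin is handled correctly: for $f\neq 0$ in $\RM^*(3,10)$, no correction $q$ gives weight $0$, so the lowest exponent present is exactly $d_2(f)$. I would compute $W_f$ by~\eqref{eq:quotient-enumerator}, using the split along a coordinate hyperplane. For nondegenerate orbits, Thm.~\ref{thm:nondegenerate-restriction} with $m=10$ even guarantees a nondegenerate slice $g$, so $|V_f|=2^{10}$ and only $2^{26}$ coset representatives are enumerated. For the $348$ degenerate orbits, the same formula applies with a smaller $V_f$. Correctness of~\eqref{eq:quotient-enumerator} rests on two facts. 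The first is the bijection $q\leftrightarrow(h,a_0,a_1)$. The second is the invariance of the product under $h\mapsto h+p$ and $h\mapsto h+\Delta_i g$, both already established in the text.

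Taking the maximum of these lowest exponents over all orbits gives the upper bound $408$. For the lower bound, it suffices to exhibit the $179$ representatives of Tab.~\ref{tab:rep}, each with $d_2=408$. Certifying $d_2(f)\geq 408$ requires the full enumeration, since every one of the $2^{56}$ corrections must have weight at least $408$. Certifying that $408$ is attained only requires the stored triple $(h,a_0,a_1)$, which reconstructs an explicit $q$ with $\wt(f+q)=408$ that can be checked directly. Even so, the lower bound $d_2(f)\geq 408$ for these orbits relies on the exhaustive computation.

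The main obstacle is not mathematical but the reliability of a $65$ CPU-year computation. I would support it by three checks. First, rerun the identical code at $m=9$ and reproduce the known value $\rho_{2,3}(9)=196$ and the distribution of $\RM(3,10)$. Second, check for every orbit that the coefficients of $W_f$ sum to $2^{56}$. Third, verify the upper bound independently with the local-search certificates of Sec.~\ref{sec:heuristic-upper-bound}. For each orbit these supply an explicit $q$ with $\wt(f+q)\leq 408$, which confirms $\rho_{2,3}(10)\leq 408$ without relying on the enumerator code.
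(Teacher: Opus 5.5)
Your proposal is correct and follows the paper's own argument. You evaluate the quotient enumerator~\eqref{eq:quotient-enumerator} for every catalog orbit, with $|V_f|=2^{10}$ for nondegenerate orbits (by Theorem~\ref{thm:nondegenerate-restriction} and the dimension lemma) and a smaller $V_f$ for the $348$ degenerate ones, read $d_2$ off the lowest bin, and take the maximum, which is $408$ on $179$ orbits. Your consistency checks are also the ones the paper relies on: reproducing the $m=9$ data, the total-count check, and the independent local-search certificates for $\rho_{2,3}(10)\leq 408$.
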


\section{Connection with Alternating Rank} \label{sec:arank-d2}

We now relate $d_2$ to the alternating rank (arank) of a cubic form, defined~\cite{khoruzhii_2026b} as the smallest $r$ such that
\begin{equation*}
    f(x)\deq{3}\sum_{t=1}^r u_t(x)v_t(x)w_t(x)
\end{equation*}
for linear forms $u_t,v_t,w_t$. Both invariants have natural readings in fault-tolerant quantum compilation: $\arank(f)$ is the non-Clifford cost of realizing the phase polynomial $(-1)^{f(x)}$~\cite{khoruzhii_2026a}, and $d_2(f)$ measures its approximability by a Clifford. They quantify distance from the Clifford layer in complementary senses. The bound below makes one direction precise: adding a rank-one cubic to $f$ can only increase $d_2(f)$ by a controlled amount.

\begin{lemma}
\label{lem:rank-one-step}
Let $f$ be a Boolean function on $\F_2^m$. Then $d_2(f+\varphi) \leq 2^{m-3}+\frac34 d_2(f)$ for every cubic form $\varphi$ of alternating rank one.
\end{lemma}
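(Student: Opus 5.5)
The approach is an averaging argument over the eight cosets cut out by the three linear forms of $\varphi$. First fix a quadratic $q_0\in\RM(2,m)$ attaining $d_2(f)$, and set $g=f+q_0$ and $d=\wt(g)=d_2(f)$. Since $d_2(f+\varphi)=d_2(g+\varphi)$, it suffices to exhibit a quadratic $q$ with $\wt(g+\varphi+q)\leq 2^{m-3}+\tfrac34 d$.

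Write $\varphi\deq{3}u(x)v(x)w(x)$ with linear forms $u,v,w$. If $u,v,w$ are linearly dependent, the product $uvw$ reduces modulo $x_j^2=x_j$ to a polynomial of degree at most two. Then $d_2(f+\varphi)=d_2(f)$, and the bound is immediate because $d_2(f)\leq 2^{m-2}$ for every $f$ (the constant function $1$ is quadratic, so $d_2(f)\leq\min(\wt(g),2^m-\wt(g))\leq 2^{m-1}$). I will need to double-check this last step: $2^{m-3}+\tfrac34 d\geq d$ holds precisely when $d\leq 2^{m-1}$, which is what we have.

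So assume $u,v,w$ are independent. For $a\in\F_2^3$, let
\[
C_a=\{x:(u(x),v(x),w(x))=a\}.
\]
These are eight affine subspaces of size $2^{m-3}$ that partition $\F_2^m$. The key observation is that the indicator of $C_a$ is
\[
(u+a_1+1)(v+a_2+1)(w+a_3+1).
\]
Its cubic part is $uvw$, so $\mathbf 1_{C_a}=\varphi+q_a$ for some $q_a\in\RM(2,m)$. Consequently
\[
d_2(g+\varphi)\leq \wt(g+\mathbf 1_{C_a})=d+2^{m-3}-2\,\wt(g|_{C_a})
\]
for every $a$, where $\wt(g|_{C_a})$ counts the support of $g$ inside $C_a$.

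Finally, take the minimum over $a$ and bound it by the average. Since the $C_a$ partition the space, $\sum_a\wt(g|_{C_a})=d$. The average of the right-hand side over the eight values of $a$ is therefore
\[
d+2^{m-3}-\tfrac{2d}{8}=2^{m-3}+\tfrac34 d,
\]
which gives the lemma.

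There is no real obstacle here. The only points needing care are the reduction of $\varphi$ to the indicator of a codimension-three affine subspace modulo $\RM(2,m)$, and the degenerate case of dependent linear forms handled above.
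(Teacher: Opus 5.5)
Your proof is correct and follows the paper's argument: both replace $\varphi$ modulo $\RM(2,m)$ by the indicator of one of the eight codimension-three cells $C_a$ cut out by the three linear forms, bound $d_2(f+\varphi)$ by $d_2(f)+2^{m-3}-2\,\wt(g|_{C_a})$, and finish by averaging over $a$ (the paper phrases this as pigeonhole: some cell meets the error support in at least $d_2(f)/8$ points). Your separate dependent-forms case is vacuous, since alternating rank one forces $u,v,w$ to be independent; also, the claim $d_2(f)\leq 2^{m-2}$ that you write there is false in general (e.g.\ $d_2=408>2^{8}$ at $m=10$), although only the correct bound $d_2(f)\leq 2^{m-1}$, which you do justify, is actually used.
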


\begin{proof}
Write $\varphi=\ell_1\ell_2\ell_3$ with independent linear forms. Let $q\in\RM(2,m)$ attain $d_2(f)$, so $E=\{x:f(x)+q(x)=1\}$ has size $|E|=d_2(f)$. For each $\alpha=(\alpha_1,\alpha_2,\alpha_3)\in\F_2^3$, define
\begin{equation*}
    \varphi_\alpha=(\ell_1+\alpha_1)(\ell_2+\alpha_2)(\ell_3+\alpha_3).
\end{equation*}
The cubic part of $\varphi_\alpha$ equals $\varphi$, so $\varphi+\varphi_\alpha\in\RM(2,m)$. Its support is the cell $C_\alpha=\{x:\ell_i(x)=1+\alpha_i,\ i=1,2,3\}$, one of eight cells of size $2^{m-3}$ partitioning $\F_2^m$. Adding $\varphi_\alpha$ to $f+q$ flips values on $C_\alpha$, so
\begin{equation*}
    d_2(f+\varphi)\leq\wt(f+q+\varphi_\alpha)=|E|+2^{m-3}-2|C_\alpha\cap E|.
\end{equation*}
The eight intersections $C_\alpha\cap E$ partition $E$, so some has size at least $|E|/8$, giving
\begin{equation*}
    d_2(f+\varphi)\leq 2^{m-3}+\tfrac{3}{4}|E|,
\end{equation*}
as claimed.
\end{proof}

A cubic form of alternating rank $r$ is a sum of $r$ arank-1 cubics, so iterating Lemma~\ref{lem:rank-one-step} gives a closed-form bound.

\begin{theorem}
\label{thm:arank-d2-bound}
Let $f$ be a Boolean cubic form on $\F_2^m$ of alternating rank $r$. Then its second-order nonlinearity satisfies $d_2(f) \leq 2^{m-1}\left(1-\left(\frac34\right)^r\right)$.
\end{theorem}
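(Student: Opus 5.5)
We argue by induction on $r$, applying Lemma~\ref{lem:rank-one-step} once for each rank-one term. Set $D_k=2^{m-1}\bigl(1-(3/4)^k\bigr)$; we claim that $d_2(f_k)\leq D_k$ for every partial sum $f_k$ of $k$ rank-one cubics.

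Fix a decomposition $f\deq{3}\sum_{t=1}^r u_tv_tw_t$ of minimal length $r$. Because $d_2$ is unchanged by adding a quadratic, we may replace $f$ by the exact sum $\sum_t u_tv_tw_t$. Define the partial sums $f_k=\sum_{t=1}^k u_tv_tw_t$, so $f_0=0$ and $f_r\deq{3} f$. The base case is $d_2(f_0)=0=D_0$.

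For the inductive step, $f_{k}=f_{k-1}+\varphi$, where $\varphi=u_kv_kw_k$. If the three linear forms are independent, $\varphi$ is a cubic form of alternating rank one. If they are dependent, $\varphi$ has degree at most two, so $d_2(f_k)=d_2(f_{k-1})$ and the bound only improves, since $D_k$ increases in $k$. In the rank-one case, Lemma~\ref{lem:rank-one-step} gives
\begin{equation*}
d_2(f_k)\leq 2^{m-3}+\tfrac34 d_2(f_{k-1})\leq 2^{m-3}+\tfrac34 D_{k-1}.
\end{equation*}
It remains to verify the recursion $D_k=2^{m-3}+\frac34 D_{k-1}$. The right-hand side equals $2^{m-3}+\frac34\cdot 2^{m-1}-2^{m-1}(3/4)^k$. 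Since $2^{m-3}+\frac34\cdot2^{m-1}=2^{m-3}+3\cdot 2^{m-3}=2^{m-1}$, this is $2^{m-1}\bigl(1-(3/4)^k\bigr)=D_k$. In other words, $D_k$ is the affine recursion with fixed point $2^{m-1}$, started at $0$.

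Taking $k=r$ gives $d_2(f)=d_2(f_r)\leq D_r$, which is the claim. There is no serious obstacle. The only points needing care are that Lemma~\ref{lem:rank-one-step} holds for arbitrary Boolean functions, so it applies to each intermediate $f_{k-1}$ even though these need not be homogeneous, and that $d_2$ depends only on the cubic part of a function.
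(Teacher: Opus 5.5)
Your proof is correct and is essentially the paper's argument: both iterate Lemma~\ref{lem:rank-one-step} along the partial sums of a length-$r$ decomposition, starting from $d_2(0)=0$. The differences are only in presentation and bookkeeping. You check that the closed form satisfies the recursion instead of summing the geometric series, and your handling of dependent factors and of non-homogeneous intermediate sums is harmless extra care.
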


\begin{proof}[Proof of Theorem~\ref{thm:arank-d2-bound}]
Put $f_k=\varphi_1+\cdots+\varphi_k$ and $D_k=d_2(f_k)$, where each $\varphi_i$ has alternating rank one. Since $D_0=0$, Lemma~\ref{lem:rank-one-step} gives $D_k\leq 2^{m-3}+\frac{3}{4}D_{k-1}$. Iterating this recurrence yields
\begin{equation*}
    D_r\leq 2^{m-3}\sum_{j=0}^{r-1}\left(\frac{3}{4}\right)^j
        =2^{m-1}\left(1-\left(\frac{3}{4}\right)^r\right).
\end{equation*}
Since $D_r=d_2(f)$, the result follows.
\end{proof}

For any cubic form $f$ with $d_2(f)=408$, Theorem~\ref{thm:arank-d2-bound} forces $\arank(f)\geq 6$. Together with the maximum $\arank=7$ in $m=10$~\cite{khoruzhii_2026b}, this gives $\arank(f)\in\{6,7\}$. The dimension $m=10$ is the first for which the bound admits more than one value: in dimensions $m\leq 9$, substituting $\rho_{2,3}(m)$~\cite{schatz_1981,hou_1996,brier_2003} forces $\arank(f)$ equal to the maximum value attained in that dimension~\cite{khoruzhii_2026b}. Both possibilities occur at $m=10$: of the $179$ orbits with $d_2=408$ found in Sec.~\ref{sec:coset-weight-enum}, $177$ have $\arank=7$ and $2$ have $\arank=6$ (Tab.~\ref{tab:rep}).

\begin{figure*}[t]
    \centering
    \includegraphics{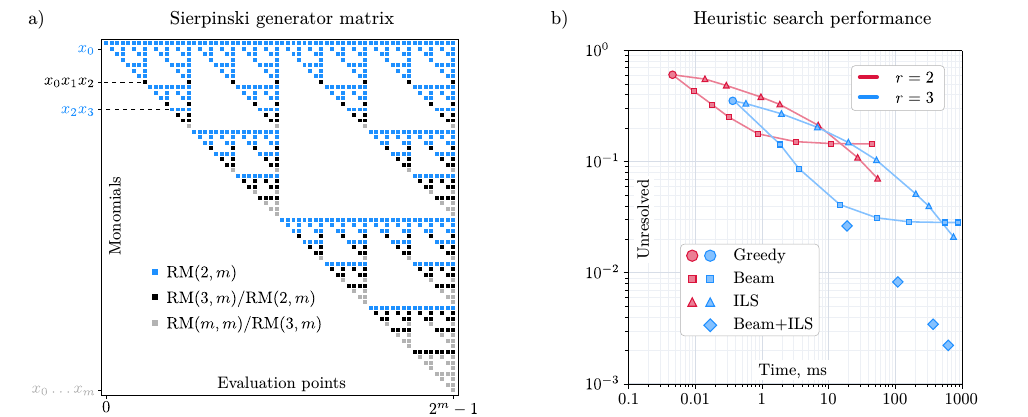}
    \caption{\justifying
        \textbf{Reed--Muller structure and heuristic search.}
        a) Generator matrix of $\RM(3,m)$, with the subcodes highlighted.
        b) Calibration of the upper-bound heuristic on a hard sample of $10^4$ catalog representatives. An instance is unresolved if the best correction found has weight greater than $400$.
    }
    \label{fig:1}
\end{figure*}

\section{Heuristic Upper Bounds} \label{sec:heuristic-upper-bound}

The computation of Sec.~\ref{sec:coset-weight-enum} settles $m=10$, but its cost is dominated by the $2^{26}$-fold enumeration repeated on every orbit. An upper bound on the covering radius needs much less: a single correction $q$ with small $\wt(f+q)$ per orbit is a certificate, found by local search and verified by one weight evaluation. This section describes such a search. It reproduces the bound $\rho_{2,3}(10)\leq 408$ at roughly a thousandth of the cost of the enumeration pass, and it extends to settings where no enumeration is available: applied to degree-seven forms, it yields $\rho_{6,7}(10)\leq 32$, improving the previous bound of $50$~\cite{dougherty_2022}.

For a fixed representative $f$, the coefficients of a correction $q\in\RM(2,10)$ form a $56$-dimensional vector: one constant, $10$ linear, and $45$ quadratic coefficients. The local search varies only the nonconstant coordinates; whenever a candidate is evaluated, the constant term is set to the value giving the smaller $\wt(f+q)$. The effective search space is thus $\F_2^{55}$, indexed by the linear and quadratic monomials.

We initialize each search at $q=0$. The baseline is one-flip greedy descent: at each step we try all single coefficient flips and accept the one giving the largest decrease of $\wt(f+q)$, stopping when no improving flip remains. Three extensions improve this descent. First, a $\tau$-move flips any nonempty set of at most $\tau$ coefficients. Second, after reaching a local endpoint, a random kick ($k$ random $\tau$-moves) followed by a new descent gives, after $i$ iterations, an iterated local search (ILS). Third, instead of keeping only the best improving move, one can keep the best $w$ improving descendants and continue from this beam. The best-performing routine combines all three ideas and is specified by the parameter tuple $(\tau,w,i,k)$.

For calibration we used a hard sample of $10^4$ catalog representatives. Preliminary runs showed that reaching weight $408$ was relatively easy, whereas the threshold $400$ produced a useful hard tail. We therefore built the sample from representatives for which a stronger but slower preliminary search had already found a correction of weight $400$, and measured how quickly the tested routines could rediscover such a correction; an instance was called unresolved if the best correction found still had weight greater than $400$. Fig.~\ref{fig:1}b plots the unresolved fraction against the running time for greedy descent, beam descent, ILS, and the combined $(\tau,w,i,k)$ search; for the combined search we swept the relevant $(w,i)$ choices and kept the best points at each time scale. The comparison showed that $\tau=2$ is preferable at very small budgets, while $\tau=3$ gives a better hard tail once more time is available, so the combined search with $\tau=3$ was used for the full catalog pass.

The full pass with $(\tau,w,i,k)=(3,64,24,1)$ found a correction of weight at most $408$ for every one of the $3\,691\,560$ nonzero representatives in $538$ CPU-hours, about three orders of magnitude below the enumeration pass of Sec.~\ref{sec:coset-weight-enum}, and left $3038$ representatives on the boundary $\wt(f+q)=408$. A second pass on the boundary with a longer ILS part $i=384$ and larger kicks $k=4$ reduced the boundary to $258$ representatives in another $14$ CPU-hours. The enumerators measure how tight this cheap search is: of the $258$ orbits where it stalled at $408$, the $179$ orbits of Tab.~\ref{tab:rep} indeed have $d_2(f)=408$, while the remaining $79$ have $d_2(f)<408$.

The same framework gives the cocubic bound. Here the corrections range over $\RM(6,10)$. Complementing monomials~\cite{dougherty_2022} identifies the $\GL(10,2)$-orbits in $\RM^*(7,10)$ with those in $\RM^*(3,10)$, so the same catalog provides all $3\,691\,560$ degree-seven representatives. The combined search with corrections in $\RM(6,10)$ found, for every representative $g\in\RM^*(7,10)$, a polynomial $p\in\RM(6,10)$ with $\wt(g+p)\leq 32$.

\begin{lemma}
The relative covering radius of $\RM(6,10)$ in $\RM(7,10)$ is at most $32$.
\label{lem:rm6}
\end{lemma}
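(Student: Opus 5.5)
The bound is certified computationally, by reducing it to one verified correction per $\GL(10,2)$-orbit. There are three steps, in this order.

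First, the definition of the relative covering radius gives
\begin{equation*}
\rho_{6,7}(10)=\max_{g\in\RM(7,10)} d_6(g),
\end{equation*}
and $d_6$ is unchanged by adding any element of $\RM(6,10)$. So $d_6$ descends to the quotient $\RM^*(7,10)$ of homogeneous degree-seven forms. Moreover $\wt(g(Ax)+p(Ax))=\wt(g+p)$ for $A\in\GL(10,2)$, and $p(Ax)\in\RM(6,10)$ whenever $p\in\RM(6,10)$. Hence $d_6$ is constant on $\GL(10,2)$-orbits in $\RM^*(7,10)$, and it suffices to bound $d_6$ on one representative per orbit.

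Second, I need a complete list of orbit representatives. This comes from the monomial-complementation correspondence~\cite{dougherty_2022}: the map sending a form $\sum_S c_S x^S$ to $\sum_S c_S x^{[10]\setminus S}$ sends cubic forms to degree-seven forms. Modulo lower order, it intertwines the action of $A$ with the action of the inverse transpose, up to the standard duality of $\RM^*(3,10)$ and $\RM^*(7,10)$ as dual representations. It therefore induces a bijection of orbit sets. Applied to the $3\,691\,560$ nonzero cubic representatives of~\cite{khoruzhii_2026b}, it yields representatives of every nonzero degree-seven orbit; the zero orbit trivially has $d_6=0$. The point needing care is that the correspondence really is orbit-preserving in the exact form used. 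I would cite~\cite{dougherty_2022} for this and not rederive the contragredient identification.

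Third, for each representative $g$, the combined $(\tau,w,i,k)$ local search of this section, run over the coefficient space of $\RM(6,10)$, produces an explicit $p$. One weight evaluation on $2^{10}$ points then verifies $\wt(g+p)\leq 32$, so $d_6(g)\leq 32$. Taking the maximum over all orbits gives the lemma. The main obstacle is this search step. The space $\RM(6,10)$ has dimension $848$, so $\tau=3$ neighborhoods are large and some hard orbits may stall above $32$. For those I would first try longer ILS phases with larger kicks, as in the boundary pass for $\rho_{2,3}$. The proof itself only requires that the stored certificates $p$ be checked, and that check is rigorous and cheap.
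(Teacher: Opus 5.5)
Your first two steps are the paper's own reduction. First, $d_6$ is constant on $\GL(10,2)$-orbits of $\RM^*(7,10)$. Second, complementing monomials carries the $3\,691\,560$ cubic representatives to a full set of degree-seven representatives. Your inverse-transpose/duality remark is correct, and the paper likewise cites Dougherty et al.\ for this.

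\textbf{The gap is in step three.} That step carries the whole content of the lemma, because the bound is only as strong as the certificates actually produced. You propose to reuse the monomial-coefficient $(\tau,w,i,k)$ search over the $848$-dimensional coefficient space of $\RM(6,10)$. The paper deliberately abandons exactly this for the cocubic case: its preliminary tests found monomial moves less effective. Instead, a move adds $h=\prod_{s=1}^{6}\ell_s$, with affine forms $\ell_s$ whose linear parts are independent. This $h$ is the indicator of a $4$-dimensional affine flat, i.e.\ a minimum-weight (weight $16$) codeword of $\RM(6,10)$. The roughly $3.4\times10^9$ such moves cannot be scanned, so $2048$ of them are sampled per beam node. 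Each sample is the affine span of five random affinely independent points of the current error set $\{x: g(x)\neq p(x)\}$, and beam/ILS is run with $(w,i,k)=(16,2048,1)$.

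\textbf{Why the move set matters.} Near the target weight $32=2\cdot 16$, the productive moves are flats that overlap the error set heavily. In the monomial basis only the axis-parallel flats $x_S$ with $|S|=6$ are single flips. A generic flat expands into many monomials, so it is out of reach of $\tau\leq 3$ flips. Flipping a monomial of degree $|S|\leq 5$ is a coarse move that changes $2^{10-|S|}\geq 32$ points at once.

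\textbf{Cost.} A full $\tau=3$ neighborhood has about $\binom{848}{3}\approx 10^8$ moves per beam node. That is roughly $3600$ times the $2.8\times10^4$ of the $55$-coordinate $\rho_{2,3}$ search, and about $5\times10^4$ times the $2048$ sampled moves of the paper's cocubic search. So your plan is both far more expensive than the paper's $1705$ CPU-hour pass and gives no assurance of reaching $32$ on the hard orbits.

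Longer ILS phases and larger kicks, your proposed fallback, do not address this. What your proposal is missing is a move space adapted to the low-weight codewords of $\RM(6,10)$ and sampled from the error support. That choice is what made the certificates, and hence the lemma, obtainable.
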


The cocubic pass took $1705$ CPU-hours. The move set, adapted to the degree-seven geometry, is described in App.~\ref{app:cocubic-search}.

\input{tab1.tex}

\input{tab2.tex}

\section{Discussion} \label{sec:discussion}

Theorem~\ref{thm:nondegenerate-restriction} is not specific to $m=10$: it guarantees the full $2^m$ quotient in every dimension, up to the single exceptional orbit in odd ones. At $m=11$ this cuts the coset enumeration from $2^{45}$ homogeneous quadratic forms to $2^{34}$ cosets, which is feasible per representative. For comparison, the current record $\rho_2(11)\geq 856$ was established by computing $d_2=856$ for the cubic form $\mathrm{Tr}(x^7)$ on $\GF(2^{11})$, reported as taking about six days~\cite{gao_2026}; the quotient enumeration of Sec.~\ref{sec:coset-weight-enum} produces the full coset weight enumerator of the same form in $60$ CPU-hours. 

The relative results bear on the covering radius of $\RM(2,m)$ itself. For $m\leq 7$ the covering radius $\rho_2(m)$ is attained on cubic forms: $\rho_2(6)=18$~\cite{schatz_1981} and $\rho_2(7)=40$~\cite{wang_2019}. For $m=9,10,11$ the best known lower bounds likewise come from cubic forms~\cite{brier_2003,fourquet_2008,gao_2026}; in particular, the previous bound $\rho_2(10)\geq 400$ is attained by cubic trace monomials~\cite{fourquet_2008}. The present computation raises it to $\rho_2(10)\geq 408$, with the $179$ orbits of Tab.~\ref{tab:rep} as explicit witnesses and as the natural candidates for the extremal functions.

Two structural signatures of the extremal orbits suggest where to look for high-$d_2$ forms in $m>10$. The first signature is the alternating rank. Thm.~\ref{thm:arank-d2-bound} forces $\arank(f)\geq 6$ on every $m=10$ extremal orbit, and $177$ of the $179$ sit at the maximal $\arank=7$. Thus, high-$d_2$ candidates appear to be strongly concentrated among forms of maximal alternating rank.

The second signature is the trace structure. Every cubic trace monomial in ten variables has $d_2\leq 400$ (App.~\ref{app:trace}), so the monomial family behind the best known lower bounds stops short of the extremal set, and only one of the $179$ extremal orbits is a two-term trace polynomial $\mathrm{Tr}(\omega x^{21}+x^{49})$. This shift from monomials to binomials suggests searching for high-$d_2$ forms in $m>10$ among sparse trace polynomials with two or more terms.

The enumeration data and reference code are archived on Zenodo, \href{https://zenodo.org/records/20773273}{\texttt{zenodo.org/records/20773273}}~\cite{bcf10_zenodo}. The archive contains the coset weight enumerators for $f+\RM(2,10)$, the assembled weight enumerators of $\RM(3,10)$ and $\RM(3,11)$, the explicit distance-$408$ quadratic corrections, and the cocubic certificates. It also includes reference implementations, in particular the exact $\RM(2,10)$ coset-enumerator code. The archive is shared with the classification paper~\cite{khoruzhii_2026b}; therefore it also contains the catalog data used here, such as packed representatives and stabilizer orders, together with verification scripts for the classification data. The accompanying repository, \href{https://github.com/khoruzhii/bcf10}{\texttt{github.com/khoruzhii/bcf10}}, contains supplementary code from~\cite{khoruzhii_2026b}, as well as the heuristic upper-bound search, the $m=11$ coset-enumeration example, and the trace-polynomial checks.

\section*{Acknowledgments}

This research was supported by the DFG Cluster of Excellence MATH+ (EXC-2046/2, project id 390685689) funded by the Deutsche Forschungsgemeinschaft (DFG), as well as by the National High-Performance Computing (NHR) network.

\section{Appendix} \label{app:appendix}

\subsection{Proof of Nondegenerate Restrictions} \label{app:nondegenerate-restrictions}

\begin{lemma}
Fix a coordinate split $x=(t,y)\in\F_2\times\F_2^{m-1}$ and write $f(t,y)=g(y)+tp(y)$ for a nondegenerate cubic form $f$ on $\F_2^m$. Put $V_f=\langle p,\Delta_a g\rangle$  with $a\in\F_2^{m-1}$. Then $\dim V_f=m-\dim\Rad(g)$. In particular, if $g$ is nondegenerate, then $\dim V_f=m$.
\end{lemma}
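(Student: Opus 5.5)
The plan is to compute $\dim V_f$ by splitting $V_f$ into the span of the differences $\Delta_a g$ and the single extra generator $p$. All quadratics are read modulo $\RM(1,m-1)$, i.e.\ by their homogeneous quadratic parts, as in the definition of $\RM^*(2,m-1)$.

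First, I will consider the map $L:\F_2^{m-1}\to\RM^*(2,m-1)$, $a\mapsto\Delta_a g$ (quadratic part). Since $g$ is a cubic form, the quadratic part of $g(y+a)+g(y)$ is linear in $a$. Each cubic monomial $y_iy_jy_k$ contributes $a_iy_jy_k+a_jy_iy_k+a_ky_iy_j$ plus terms of degree at most one. So $L$ is a linear map, and its image is $\langle\Delta_a g: a\in\F_2^{m-1}\rangle$. By the definition of the radical, $\ker L=\Rad(g)$. Rank–nullity then gives that the span of the differences has dimension $(m-1)-\dim\Rad(g)$. This also shows that the coordinate differences $\Delta_0 g,\ldots,\Delta_{m-2}g$ used in Sec.~\ref{sec:coset-weight-enum} already span this image.

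It therefore remains to show that $p\notin\operatorname{im}L$; this step also covers the case $p=0$. This is the only place where the nondegeneracy of $f$ enters, and I expect it to be the main point of the argument. Suppose instead that $p\deq{2}\Delta_a g$ for some $a\in\F_2^{m-1}$. I will show that $u=(1,a)\in\F_2^m$ lies in $\Rad(f)$. Expanding gives
\begin{equation*}
\Delta_u f(t,y)=\Delta_a g(y)+p(y+a)+t\,\Delta_a p(y).
\end{equation*}
The quadratic part of $p(y+a)$ equals that of $p(y)$. Hence the $t$-free quadratic part is $\Delta_a g+p\deq{2}0$ by assumption.

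The delicate term is the mixed quadratic part $t\cdot(\text{linear part of }\Delta_a p)$, which need not vanish a priori. To handle it, write $p=\Delta_a g+\ell$ with $\ell$ of degree at most one. Then $\Delta_a p=\Delta_a\Delta_a g+\Delta_a\ell$. Over $\F_2$ we have $\Delta_a\Delta_a g(y)=g(y+2a)+g(y)=0$, and $\Delta_a\ell$ is a constant. So $\Delta_a p$ is constant, and $t\,\Delta_a p$ has degree at most one. Thus $\Delta_u f\deq{2}0$, so $u\neq0$ lies in $\Rad(f)$, contradicting nondegeneracy.

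Hence $p$ is linearly independent of $\operatorname{im}L$, and $\dim V_f=(m-1)-\dim\Rad(g)+1=m-\dim\Rad(g)$. The special case follows at once: if $g$ is nondegenerate then $\Rad(g)=0$ and $\dim V_f=m$.
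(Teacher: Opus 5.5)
Your proof is correct and follows the paper's argument. Both use rank--nullity for $a\mapsto\Delta_a g$ with kernel $\Rad(g)$, then show $p\notin\operatorname{im}L$, because $p\deq{2}\Delta_a g$ would put $(1,a)$ into $\Rad(f)$, with $\Delta_a\Delta_a g=0$ killing the mixed term $t\,\Delta_a p$. Your extra remarks (linearity of $L$, so the coordinate differences already span the image, and the case $p=0$) are details the paper leaves implicit.
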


\begin{proof}
The map $a\mapsto\Delta_a g$ from the $y$-space to $\RM^*(2,m-1)$ has kernel $\Rad(g)$, so its image has dimension $m-1-\dim\Rad(g)$. It remains to show that $p$ is not in this image. Suppose, to the contrary, that $p\deq{2}\Delta_a g$ for some $a$. Then
\begin{equation*}
    \Delta_{(1,a)}f\deq{2}\Delta_a g+p+t\Delta_a p.
\end{equation*}
The first two terms cancel, and $\Delta_a p\deq{1}\Delta_a\Delta_a g=0$. Hence $\Delta_{(1,a)}f\deq{2}0$, so $(1,a)\in\Rad(f)$, contradicting the nondegeneracy of $f$. Thus $p$ contributes one additional independent generator.
\end{proof}

The useful case for the enumerator is therefore a split with nondegenerate $g$. The remaining question is whether such a split can always be found. Equivalently, can a nondegenerate cubic form have only degenerate cubic restrictions in every coordinate split? The next lemma shows that this failure is rigid: it happens only in odd dimension and only for one $\GL(m,2)$-orbit.

\begin{lemma}
Let  $f$ be a nondegenerate cubic form on $\F_2^m$ with $m>3$. If every hyperplane restriction of $f$ is degenerate, then $m$ is odd and there is an $A\in\GL(m,2)$ such that $f(Ax)\deq{3} f_* (x) = x_0(x_1x_2+x_3x_4+\cdots+x_{m-2}x_{m-1})$.
\end{lemma}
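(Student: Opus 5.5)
Let $H=\ker\lambda$ be a hyperplane, with $\lambda$ a nonzero linear form, and let $g_H=f|_H$. I would use the trilinear form $T(u,v,w)=\Delta_u\Delta_v\Delta_w f$. This form is symmetric and alternating, and $\Rad(f)=\{u: T(u,\cdot,\cdot)=0\}$. The radical of the restriction is
\begin{equation*}
\Rad(g_H)=\{u\in H: T(u,v,w)=0\ \text{for all } v,w\in H\}.
\end{equation*}
For each $u\in H$ write $B_u=T(u,\cdot,\cdot)$. This is a nonzero alternating bilinear form, because $f$ is nondegenerate. The condition $u\in\Rad(g_H)$ says that $B_u$ vanishes on $H\times H$. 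An alternating form vanishing on a hyperplane $\ker\lambda$ has the shape $B_u=\lambda\wedge\mu_u$ for some linear form $\mu_u$. So $B_u$ has rank $2$ and contains $\lambda$ in its "span". The hypothesis therefore reads: for every nonzero $\lambda$ there is $u\neq0$ with $\lambda(u)=0$ and $B_u=\lambda\wedge\mu_u$. Note also that $\mu_u(u)=0$ by alternation in $u$, and that $\mu_u$ is independent of $\lambda$.

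Step 1 is to collect the rank-2 directions. Let $S=\{u\neq 0: \operatorname{rank}B_u=2\}$. Each $u\in S$ determines a 2-dimensional space of linear forms $P_u=\langle\lambda,\mu_u\rangle$, the image of $B_u$. The hypothesis says that the union of the $P_u$ over $u\in S$ covers every nonzero $\lambda\in(\F_2^m)^*$. Each $P_u$ has only $3$ nonzero elements, so this forces many rank-2 slices.

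Step 2 is to show that the $P_u$ all share a common nonzero form, which I will call $x_0$. Take $u,v\in S$ with $P_u\cap P_v=0$. Then $B_{u+v}=B_u+B_v$ has rank $4$. The plan is to exploit the symmetry of $T$: $B_u(v,w)=B_v(u,w)$. This ties $\mu_u,\lambda_u$ evaluated at $v$ to $B_v(u,\cdot)$, and I expect it to force $P_u\cap P_v\neq 0$ for all $u,v\in S$ in most configurations. A family of 2-dimensional subspaces that pairwise intersect is either contained in a common 3-dimensional space or all pass through a common line. The first option would cover only $7$ of the $2^m-1$ forms, which is impossible for $m>3$. Hence there is a common form $x_0$ with $x_0\in P_u$ for all $u\in S$. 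I expect this step to be the main obstacle. The symmetry argument must rule out disjoint pairs, possibly by producing a third slice whose rank contradicts the covering. If it is hard, I would first restrict to $u,v$ whose planes cover two forms $\lambda,\lambda'$ with $\lambda+\lambda'$ not covered by either, and derive the relation from $T(u,v,\cdot)$.

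Step 3 identifies $f$. By Step 2, for $u\in S$ we have $B_u=x_0\wedge\mu_u$. Covering every $\lambda$ means that the forms $\mu_u$, taken modulo $x_0$, sweep out all of $(\F_2^m)^*/\langle x_0\rangle$. Set $K=\ker x_0$. Then $T$ restricted to $K^3$ vanishes on $S$-directions that span $K$, so $T|_K=0$ and $f$ vanishes cubically on $K$. In coordinates with $x_0$ first, this gives $f\deq{3}x_0Q(x_1,\dots,x_{m-1})$, and adding multiples of $x_0$ inside $Q$ does not change the cubic part. Nondegeneracy of $f$ forces the quadratic form $Q$ on the $(m-1)$-dimensional space $K$ to have trivial radical for its alternating form: any radical vector $r$ of $Q$ would give $\Delta_r f\deq{2}0$. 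A nondegenerate alternating form exists only in even dimension, so $m-1$ is even and $m$ is odd. The symplectic normal form then gives $Q\sim x_1x_2+\dots+x_{m-2}x_{m-1}$ modulo squares, i.e. linear terms, which disappear in the cubic part. This yields $f\sim f_*$. Finally, I would check directly that $f_*$ has no nondegenerate hyperplane restriction. If $\lambda$ is not a multiple of $x_0$, then $x_0\wedge\mu$ with $\mu=\lambda$ appears among the slices. If $\lambda=x_0$, the restriction is $0$, which is degenerate.
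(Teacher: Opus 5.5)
Your overall architecture is the paper's. The planes $P_u$ cover all nonzero linear forms and pairwise intersect, so they share a common form $x_0$; the three-dimensional alternative is excluded for $m>3$. Then $f\deq{3}x_0\omega$ with $\omega$ nondegenerate, and the symplectic normal form finishes. But Step 2, which you flag as the main obstacle, is not proved. ``I expect it to force $P_u\cap P_v\neq0$ \dots in most configurations'' is a hope, not an argument, and the pairwise-intersection property is where the content of the lemma lies. The observation that $B_{u+v}$ has rank $4$ leads nowhere on its own, since slices of large rank are allowed: for $f_*$ itself the slice in direction $e_0$ is $\omega$, of rank $m-1$.

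The missing argument has two parts.

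\emph{Step one.} Carry the symmetry you mention to its conclusion. $T(u,v,\cdot)=B_u(v,\cdot)$ lies in the image $P_u$, and it also equals $B_v(u,\cdot)\in P_v$. So if $P_u\cap P_v=0$, then $T(u,v,\cdot)=0$. Writing $B_u=\alpha\wedge\beta$, the identity $\alpha(v)\beta+\beta(v)\alpha=0$ forces every form in $P_u$ to vanish at $v$. Symmetrically, every form in $P_v$ vanishes at $u$. Together with $B_u(u,\cdot)=B_v(v,\cdot)=0$, every form in $P_u+P_v$ vanishes at both $u$ and $v$.

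\emph{Step two.} Use the covering once more. Since $P_u\neq P_v$, we have $u\neq v$, so pick $\eta$ with $\eta(u)=\eta(v)=1$ and a plane $P_w\ni\eta$. Apply step one to the pairs $(w,u)$ and $(w,v)$: $P_w$ must meet both $P_u$ and $P_v$ nontrivially, otherwise $\eta$ would vanish at $u$ or at $v$. As $P_u\cap P_v=0$, these are two distinct lines in $P_w$, so $P_w\subseteq P_u+P_v$. Then $\eta(u)=0$, a contradiction.

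\emph{Smaller points in Step 3.} Two claims also need a line each:
\begin{itemize}
\item Every $u\in S$ lies in $K=\ker x_0$, because $B_u(u,\cdot)=x_0(u)\mu_u+\mu_u(u)x_0=0$ and $x_0,\mu_u$ are independent.
\item The $S$-directions span $K$, because $u\mapsto B_u$ is linear and the forms $x_0\wedge\mu_u$ already span an $(m-1)$-dimensional space.
\end{itemize}
Your final check that $f_*$ is exceptional is not part of this lemma.
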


\begin{proof}
Fix a nonzero linear form $\lambda$, determining the hyperplane $\ker\lambda = \{x : \lambda(x) = 0 \}$. Since the restriction is degenerate, there is a nonzero $a\in\ker\lambda$ such that
\begin{equation*}
     \Delta_a f |_{\ker \lambda} \deq{2} 0.
\end{equation*}
and therefore 
\begin{equation*}
    \Delta_a f \deq{2} \lambda \mu 
\end{equation*}
for some linear form $\mu\notin\langle\lambda\rangle$. For each such factorization, keep the two-dimensional space $P_a=\langle\lambda,\mu\rangle$ of linear factors. These spaces cover all nonzero linear forms, because the construction starts from an arbitrary $\lambda$.

We claim that these spaces are pairwise intersecting. Suppose, to the contrary, that two of them satisfy $P_a\cap P_b=0$. The linear part of $\Delta_b\Delta_a f=\Delta_a\Delta_b f$ lies in both $P_a$ and $P_b$, hence $\Delta_b\Delta_a f\deq{1}0$. Writing $\Delta_a f\deq{2}\alpha\beta$ with $P_a=\langle\alpha,\beta\rangle$, gives 
\begin{equation*}
    \alpha(b)\beta+\beta(b)\alpha=0,
\end{equation*}
so $\nu(b)=0$ for all $\nu\in P_a$. Similarly, $\nu(a)=0$ for all $\nu\in P_b$. Also $\Delta_a\Delta_a f=\Delta_b\Delta_b f=0$, so $\nu(a)=0$ for all $\nu\in P_a$ and $\nu(b)=0$ for all $\nu\in P_b$. Thus
\begin{equation*}
    \nu(a)=\nu(b)=0
\end{equation*}
for every $\nu\in P_a+P_b$.
Since $a$ and $b$ are distinct nonzero vectors, choose a linear form $\eta$ with $\eta(a)=\eta(b)=1$. By the covering property, $\eta$ lies in some space $P$. The space $P$ cannot be disjoint from either $P_a$ or $P_b$, because otherwise every form in $P$ would be zero on $a$ or on $b$, contradicting $\eta(a)=\eta(b)=1$. Thus $P\cap P_a\ne0$ and $P\cap P_b\ne0$. Since $P_a\cap P_b=0$ and $\dim P=2$, these two intersections force $P\subseteq P_a+P_b$. But then $\eta\in P_a+P_b$, contradicting $\eta(a)=\eta(b)=1$. Therefore the spaces are pairwise intersecting.

A pairwise-intersecting collection of two-dimensional spaces either has a common nonzero form or is contained in a three-dimensional space. Since our spaces cover all nonzero linear forms and $m>3$, the second alternative is impossible. Hence all spaces contain a common nonzero linear form $\ell$.

Since the supports cover all nonzero linear forms and all contain $\ell$, for every $\lambda\notin\langle\ell\rangle$ there is a witness $a \in \ker \ell$ such that
\begin{equation*}
    \Delta_{a}f\deq{2}\ell\lambda.
\end{equation*}
Choose linear forms $\lambda_1,\ldots,\lambda_{m-1}$, forming with $\ell$ a basis, with corresponding witnesses $a_1,\ldots,a_{m-1}\in\ker\ell$. The quadratics $\ell\lambda_i$ are independent, so the vectors $a_i$ are independent because $f$ is nondegenerate. Hence $a_1,\ldots,a_{m-1}$ form a basis of $\ker\ell$. Since the quadratic part of $\Delta_a f$ depends linearly on $a$, every quadratic derivative in a direction $a\in\ker\ell$ is a linear combination of the quadratics $\ell\lambda_i$. Hence each such derivative has a factor $\ell$, and $\Delta_a f|_{\ker\ell}\deq{2}0$ for all $a\in\ker\ell$. Thus $f|_{\ker \ell} \deq{3} 0$.

Choose coordinates with $x_0=\ell$. Therefore
\begin{equation*}
    f\deq{3}x_0\omega
\end{equation*}
for some homogeneous quadratic form $\omega$ in the remaining $m-1$ variables. The nondegeneracy of $f$ forces $\omega$ to be nondegenerate. Thus $\omega$ is a nondegenerate alternating quadratic form on an $(m-1)$-dimensional space. Hence $m-1$ is even, and by the symplectic normal form over $\F_2$ there is a further linear change of coordinates such that
\begin{equation*}
    \omega\deq{2}x_1x_2+x_3x_4+\cdots+x_{m-2}x_{m-1}.
\end{equation*}
This gives the required $A\in\GL(m,2)$.
\end{proof}

The cases $m<3$ are vacuous. For $m=3$, the space of cubic forms is one-dimensional, and its nonzero element is equivalent to $x_0x_1x_2$. This form is nondegenerate, and every hyperplane restriction has zero cubic part, so it is the exceptional orbit. Assume $m>3$. If a nondegenerate cubic form has no nondegenerate hyperplane restriction, the previous lemma shows that $m$ is odd and the form is equivalent to $f_*$. Thus every other nondegenerate cubic form has a nondegenerate hyperplane restriction.

It remains to check that $f_*$ is exceptional. First, $f_*$ is nondegenerate. For $(\alpha,v)\in\F_2\times\F_2^{m-1}$,
\begin{equation*}
    \Delta_{(\alpha,v)} f_*\deq{2}\alpha\omega+x_0\Delta_v\omega.
\end{equation*}
The two summands have different $x_0$-degree, so the quadratic part can vanish only if $\alpha=0$ and $\Delta_v\omega\deq{1}0$. Since $\omega$ is nondegenerate, this forces $v=0$. Therefore $\Rad f_* = 0$.

Now fix any coordinate split $x=(t,y)\in\F_2\times\F_2^{m-1}$ and write $f_*(t,y)=g(y)+tp(y)$. If $t=x_0$, then $g=0$. Otherwise choose coordinates on $t=0$ so that one coordinate is the restriction of $x_0$ and the remaining coordinate space is $L=\{t=x_0=0\}$. In these coordinates,
\begin{equation*}
    g\deq{3}x_0\,\omega|_L.
\end{equation*}
Since $\dim L=m-2$ is odd, $\omega|_L$ has a nonzero radical, so $g$ is degenerate.

\subsection{Cocubic Upper-Bound Search} \label{app:cocubic-search}

Here we describe the auxiliary computation behind the bound for $\RM(6,10)$ in $\RM(7,10)$. For each cubic representative $f_j\in\RM^*(3,10)$, we form the complementary representative $g_j\in\RM^*(7,10)$ by replacing each monomial $x_i x_j x_k$ with the product of the other seven variables. These are representatives of all nonzero $\GL(10,2)$-orbits in $\RM^*(7,10)$~\cite{dougherty_2022}.

For the cocubic search we used a different move space. Instead of flipping monomial coefficients in $\RM(6,10)$, a move adds a polynomial of the form
\begin{equation*}
    h(x)=\prod_{s=1}^{6}\ell_s(x),
\end{equation*}
where the $\ell_s$ are affine linear forms with independent linear parts. Preliminary tests showed that these geometric moves were more effective than moves in the monomial basis. The full move set has about $3.4 \times 10^9$ distinct moves, so we do not scan the whole neighborhood. Instead we sample $2048$ moves. A sampled move is produced from the current polynomial $g+p$ by choosing five random affinely independent points from the set $\{x:g(x)\neq p(x)\}$. These points span a $4$-dimensional affine subspace of $\F_2^{10}$. We take $h$ to be the polynomial of the form above whose support $\{x:h(x)=1\}$ is this affine span. This biases the search toward large support overlap. The local search then uses the same beam and ILS approach as in Sec.~\ref{sec:heuristic-upper-bound}, but with sampled moves.

The full pass used $(w,i,k)=(16,\ 2048,\ 1)$ and $2048$ sampled moves per beam node. It completed for all $3691560$ nonzero representatives. No representative remained above weight $32$. For each representative $g_j$, the computation stores an explicit correction $p_j\in\RM(6,10)$.

\subsection{Trace Representations of the Extremal Forms} \label{app:trace}

Here we test whether the $179$ extremal orbits admit compact finite-field descriptions. Identify $\F_2^{10}$ with $\GF(2^{10})=\F_2[\alpha]/(\alpha^{10}+\alpha^3+1)$ and write
\begin{equation*}
    \mathrm{Tr}(a)=a+a^2+\cdots+a^{2^9}
\end{equation*}
for the absolute trace. For an exponent $d$ of binary weight three, the function $x\mapsto\mathrm{Tr}(\lambda x^d)$ is a Boolean cubic form and Frobenius conjugation $d\mapsto 2d$ partitions these exponents into $12$ cyclotomic classes.
The previous lower bound $\rho_2(10)\geq 400$ is attained by such trace monomials~\cite{fourquet_2008}, so it is natural to ask whether the extremal orbits contain them.

A candidate is tested by evaluating the complete $\GL(10,2)$-invariant of~\cite{khoruzhii_2026b}. The invariant takes distinct values on all orbits, so a matching value identifies the orbit. Scanning all $12\cdot (2^{10}-1)$ trace monomials $\mathrm{Tr}(\lambda x^d)$ with $\lambda\neq 0$ shows that each exponent class produces one or two $\GL(10,2)$-orbits, all with $d_2\in\{352,360,400\}$. No trace monomial reaches $408$.

We then scanned all two-term trace polynomials $\mathrm{Tr}(\lambda x^d+\mu x^e)$ over the same exponent classes and all nonzero coefficient pairs, $66\cdot (2^{10}-1)^2=69\,070\,914$ candidates in total. It reaches a single $d_2=408$ orbit,
\begin{equation*}
    f=\mathrm{Tr}(\omega x^{21}+x^{49}),
\end{equation*}
where $\omega=\alpha^2+\alpha^3+\alpha^5+\alpha^6+\alpha^7$ satisfies
\begin{equation*}
    \omega^2+\omega+1=0,
\end{equation*}
a primitive cube root of unity. The catalog representative of this orbit is $025+028+047+128+129+136+148+156+179+236+359+378+469+568+678$, with $|\St(f)|=5$ and $\arank(f)=7$. The remaining $178$ extremal orbits admit no trace representation with at most two terms over the cubic exponent classes.

\bibliography{references.bib}

\end{document}

%% file: tab1.tex
\begin{table*}
\caption{Nonzero weight multiplicities of $\RM(3,11)$ up to complement symmetry. The table gives the number of codewords of weight $w$ for $0\leq w\leq 1024$; omitted weights in this range have multiplicity zero, and the entry for weight $2048-w$ is the same.}
\label{tab:enum}
\centering
\renewcommand{\arraystretch}{0.95}
\begin{tabular}{rr}
\toprule
$w$ & number of codewords of weight $w$ \\
\midrule
$0$ & $1$ \\
$256$ & $407647768$ \\
$384$ & $98572491484544$ \\
$448$ & $276678901812617216$ \\
$480$ & $20615149353221226496$ \\
$496$ & $61905903043104210944$ \\
$512$ & $376732003274980265308$ \\
$528$ & $61905903043104210944$ \\
$544$ & $36067676304128622985216$ \\
$576$ & $4744067074463987615916032$ \\
$592$ & $58004870978198589344317440$ \\
$608$ & $1337472133556209119154667520$ \\
$624$ & $9396613326131873902296563712$ \\
$640$ & $224515947039561245253371260800$ \\
$656$ & $5830645939509459951889648975872$ \\
$664$ & $2419077324635505387171301294080$ \\
$672$ & $143171775339255975773371298217984$ \\
$680$ & $118534788907139763971393763409920$ \\
$688$ & $4172133049353846806513019732885504$ \\
$696$ & $6530702417407652710233456393584640$ \\
$704$ & $142049722384193638439855542034857984$ \\
$712$ & $370985666710560047000945480957952000$ \\
$720$ & $6572128354504685192324577309902241792$ \\
$728$ & $28716300050295011531014790520420433920$ \\
$736$ & $415629184674851549258147208655812952064$ \\
$744$ & $2690951593078403784716729428137071345664$ \\
$752$ & $37827275484769393002546972405932606095360$ \\
$760$ & $370900717708670977732674519658689233682432$ \\
$768$ & $5505529296214564743102269157190400156746664$ \\
$776$ & $79894247600845975291315175037253158124388352$ \\
$784$ & $1450305177493270675359691374214264637850910720$ \\
$792$ & $29235356704790566516562551934810916714341990400$ \\
$800$ & $665799808379784560217854189229920478427489501184$ \\
$808$ & $16063992125312051655233396041061195367979885264896$ \\
$816$ & $389343899216758993477153060624106334133347655090176$ \\
$824$ & $8999235314122324085793822909334985019312993120288768$ \\
$832$ & $191323551649196821711001947792058498033273954293481472$ \\
$840$ & $3658740594851885600503798449378675081425867112456388608$ \\
$848$ & $62180730085983850326083175198032893774117004267936022528$ \\
$856$ & $933627921643138541428942307297065262293476936162893365248$ \\
$864$ & $12352904069973355408081004302523980802074961354772346568704$ \\
$872$ & $143893873726630427501277688558916646342435284778499147038720$ \\
$880$ & $1475434974947081710061936957117530055868461154884705079787520$ \\
$888$ & $13318672685563145171399602295039475393147723949626043926577152$ \\
$896$ & $105868711887026301874770702718612016167682651408928401326236416$ \\
$904$ & $741226717029475749508906553789588949003364241669009433251807232$ \\
$912$ & $4572123337168224454714897939810929009585250051627595597537607680$ \\
$920$ & $24852350650395129270609343467555813365645108334460498873040240640$ \\
$928$ & $119066951769505173151986670780366771667129702286785938434899836928$ \\
$936$ & $502890594559615896984236136146693696560200753886279445577002385408$ \\
$944$ & $1872800847399495130252973899847591197256947718238294722030933639168$ \\
$952$ & $6150577569231431151720935138862873464324233440424302381575971012608$ \\
$960$ & $17815991389962934481252993676594060919773199784241432513439128895488$ \\
$968$ & $45522975052166290589248064985562854640499044441971806982761147793408$ \\
$976$ & $102619006074996608740719830566521478416265414163140030900851676545024$ \\
$984$ & $204100811417817928738018318757436575622758854101050048337198119387136$ \\
$992$ & $358193705074942017878606080764562076052212316814353868657935396110336$ \\
$1000$ & $554724454472328610642650250933921456820000687270430933160620818694144$ \\
$1008$ & $758134074338700783303792829592026102708636982915591852455410408620032$ \\
$1016$ & $914409270550460874661954063324503188329102135989979060750359088594944$ \\
$1024$ & \phantom{42}$973354525080350422090405188363407729099289656148339075855691911557574$ \\
\bottomrule
\end{tabular}
\end{table*}

%% file: tab2.tex
\begin{table*}
\caption{Representatives of all 179 $\GL(10,2)$-orbits in $\RM^*(3,10)$ with
$d_2(f)=408$. Here $K=|\{q\in\RM(2,10): \wt(f+q)=408\}|$. ${}^{\dagger}$Forms of alternating rank $6$ (two in total); unmarked forms have alternating rank $7$.}
\label{tab:rep}
\centering
\scriptsize
\renewcommand{\arraystretch}{0.87}
\newcommand{\repsep}{\texorpdfstring{\raisebox{1.5pt}{\textbf{\scalebox{0.4}{+}}}}{+}}
\begin{minipage}[t]{0.495\textwidth}
\begin{tabular}{lr}
\toprule
representative & $K/2^{10}$ \\
\midrule
$037\repsep 046\repsep 056\repsep 058\repsep 129\repsep 136\repsep 148\repsep 157\repsep 238\repsep 247\repsep 248\repsep 256$ & 28416 \\
$023\repsep 028\repsep 037\repsep 139\repsep 146\repsep 158\repsep 234\repsep 267\repsep 356\repsep 369\repsep 478\repsep 689\smash{^\dagger}$ & 19712 \\
$023\repsep 045\repsep 146\repsep 158\repsep 179\repsep 256\repsep 289\repsep 346\repsep 347\repsep 348\repsep 369\repsep 579$ & 4480 \\
$015\repsep 024\repsep 028\repsep 035\repsep 038\repsep 124\repsep 126\repsep 147\repsep 289\repsep 349\repsep 357\repsep 679$ & 4160 \\
$017\repsep 034\repsep 058\repsep 146\repsep 257\repsep 259\repsep 269\repsep 356\repsep 379\repsep 468\repsep 489\repsep 678$ & 4160 \\
$017\repsep 034\repsep 038\repsep 057\repsep 058\repsep 123\repsep 146\repsep 159\repsep 245\repsep 278\repsep 389\repsep 567$ & 3968 \\
$026\repsep 049\repsep 058\repsep 068\repsep 124\repsep 137\repsep 168\repsep 235\repsep 348\repsep 357\repsep 369\repsep 567$ & 3968 \\
$012\repsep 035\repsep 039\repsep 058\repsep 067\repsep 146\repsep 235\repsep 245\repsep 278\repsep 347\repsep 369\repsep 489$ & 3648 \\
$019\repsep 037\repsep 048\repsep 056\repsep 124\repsep 129\repsep 135\repsep 238\repsep 279\repsep 349\repsep 467\repsep 589$ & 2886 \\
$013\repsep 025\repsep 036\repsep 049\repsep 068\repsep 145\repsep 189\repsep 239\repsep 248\repsep 267\repsep 346\repsep 579$ & 2710 \\
$056\repsep 123\repsep 157\repsep 168\repsep 239\repsep 245\repsep 267\repsep 289\repsep 346\repsep 357\repsep 358\repsep 379\repsep 478$ & 13920 \\
$023\repsep 049\repsep 056\repsep 126\repsep 139\repsep 145\repsep 156\repsep 247\repsep 346\repsep 357\repsep 379\repsep 589\repsep 679$ & 10752 \\
$018\repsep 024\repsep 035\repsep 068\repsep 079\repsep 137\repsep 145\repsep 158\repsep 257\repsep 346\repsep 389\repsep 478\repsep 589$ & 6240 \\
$024\repsep 038\repsep 059\repsep 137\repsep 145\repsep 168\repsep 249\repsep 289\repsep 349\repsep 459\repsep 467\repsep 578\repsep 678$ & 5056 \\
$018\repsep 025\repsep 047\repsep 069\repsep 137\repsep 156\repsep 238\repsep 249\repsep 267\repsep 269\repsep 345\repsep 467\repsep 468\smash{^\dagger}$ & 4224 \\
$027\repsep 036\repsep 049\repsep 078\repsep 129\repsep 135\repsep 234\repsep 249\repsep 268\repsep 289\repsep 378\repsep 458\repsep 569$ & 3968 \\
$019\repsep 028\repsep 036\repsep 048\repsep 057\repsep 138\repsep 147\repsep 156\repsep 178\repsep 279\repsep 469\repsep 589\repsep 678$ & 2976 \\
$019\repsep 034\repsep 047\repsep 058\repsep 067\repsep 127\repsep 138\repsep 236\repsep 249\repsep 356\repsep 379\repsep 478\repsep 689$ & 2818 \\
$012\repsep 035\repsep 067\repsep 129\repsep 145\repsep 179\repsep 246\repsep 257\repsep 289\repsep 368\repsep 379\repsep 478\repsep 589$ & 2451 \\
$019\repsep 027\repsep 034\repsep 079\repsep 128\repsep 137\repsep 146\repsep 235\repsep 267\repsep 389\repsep 457\repsep 568\repsep 579$ & 2451 \\
$016\repsep 028\repsep 057\repsep 123\repsep 178\repsep 249\repsep 268\repsep 345\repsep 367\repsep 468\repsep 479\repsep 589\repsep 679$ & 1874 \\
$028\repsep 037\repsep 078\repsep 136\repsep 139\repsep 168\repsep 234\repsep 239\repsep 257\repsep 346\repsep 356\repsep 469\repsep 478\repsep 589$ & 4864 \\
$012\repsep 037\repsep 048\repsep 059\repsep 134\repsep 156\repsep 159\repsep 179\repsep 236\repsep 249\repsep 367\repsep 389\repsep 467\repsep 679$ & 4576 \\
$026\repsep 028\repsep 036\repsep 129\repsep 135\repsep 159\repsep 234\repsep 267\repsep 356\repsep 378\repsep 456\repsep 489\repsep 579\repsep 678$ & 4224 \\
$013\repsep 018\repsep 023\repsep 026\repsep 035\repsep 078\repsep 079\repsep 124\repsep 169\repsep 237\repsep 349\repsep 368\repsep 478\repsep 567$ & 3648 \\
$026\repsep 034\repsep 057\repsep 089\repsep 123\repsep 125\repsep 147\repsep 178\repsep 259\repsep 356\repsep 379\repsep 458\repsep 568\repsep 678$ & 3520 \\
$035\repsep 037\repsep 047\repsep 069\repsep 128\repsep 145\repsep 237\repsep 246\repsep 259\repsep 349\repsep 359\repsep 478\repsep 567\repsep 789$ & 3428 \\
$015\repsep 024\repsep 036\repsep 078\repsep 123\repsep 127\repsep 129\repsep 189\repsep 238\repsep 256\repsep 349\repsep 458\repsep 467\repsep 478$ & 3248 \\
$017\repsep 036\repsep 049\repsep 145\repsep 146\repsep 234\repsep 268\repsep 279\repsep 357\repsep 389\repsep 489\repsep 569\repsep 578\repsep 579$ & 3236 \\
$019\repsep 024\repsep 036\repsep 037\repsep 078\repsep 148\repsep 157\repsep 238\repsep 279\repsep 345\repsep 358\repsep 467\repsep 468\repsep 689$ & 3108 \\
$013\repsep 014\repsep 029\repsep 035\repsep 068\repsep 123\repsep 248\repsep 257\repsep 278\repsep 345\repsep 349\repsep 378\repsep 456\repsep 679$ & 3072 \\
$017\repsep 034\repsep 046\repsep 067\repsep 068\repsep 137\repsep 148\repsep 159\repsep 239\repsep 246\repsep 278\repsep 358\repsep 457\repsep 679$ & 2588 \\
$018\repsep 026\repsep 047\repsep 059\repsep 127\repsep 136\repsep 149\repsep 259\repsep 267\repsep 278\repsep 357\repsep 359\repsep 389\repsep 456$ & 2512 \\
$018\repsep 036\repsep 079\repsep 124\repsep 138\repsep 157\repsep 169\repsep 235\repsep 246\repsep 248\repsep 289\repsep 346\repsep 478\repsep 568$ & 2460 \\
$012\repsep 039\repsep 048\repsep 078\repsep 136\repsep 156\repsep 157\repsep 246\repsep 279\repsep 347\repsep 358\repsep 378\repsep 569\repsep 678$ & 2404 \\
$013\repsep 016\repsep 034\repsep 058\repsep 059\repsep 069\repsep 128\repsep 135\repsep 237\repsep 246\repsep 369\repsep 458\repsep 567\repsep 789$ & 2252 \\
$017\repsep 036\repsep 048\repsep 057\repsep 089\repsep 123\repsep 156\repsep 247\repsep 258\repsep 269\repsep 345\repsep 379\repsep 679\repsep 689$ & 1879 \\
$027\repsep 036\repsep 047\repsep 048\repsep 058\repsep 124\repsep 135\repsep 169\repsep 239\repsep 378\repsep 456\repsep 467\repsep 489\repsep 579$ & 1753 \\
$013\repsep 024\repsep 058\repsep 067\repsep 129\repsep 156\repsep 159\repsep 235\repsep 246\repsep 268\repsep 347\repsep 389\repsep 469\repsep 579$ & 1737 \\
$012\repsep 036\repsep 047\repsep 089\repsep 157\repsep 159\repsep 167\repsep 238\repsep 246\repsep 259\repsep 349\repsep 458\repsep 459\repsep 578$ & 1688 \\
$039\repsep 048\repsep 057\repsep 123\repsep 158\repsep 167\repsep 239\repsep 245\repsep 279\repsep 349\repsep 368\repsep 389\repsep 456\repsep 469$ & 1408 \\
$014\repsep 018\repsep 023\repsep 058\repsep 069\repsep 079\repsep 135\repsep 189\repsep 247\repsep 256\repsep 379\repsep 458\repsep 459\repsep 678$ & 1368 \\
$016\repsep 035\repsep 079\repsep 124\repsep 159\repsep 189\repsep 237\repsep 246\repsep 259\repsep 268\repsep 279\repsep 348\repsep 469\repsep 567$ & 1356 \\
$013\repsep 028\repsep 069\repsep 127\repsep 146\repsep 158\repsep 239\repsep 245\repsep 357\repsep 359\repsep 458\repsep 467\repsep 479\repsep 678$ & 1258 \\
$014\repsep 038\repsep 067\repsep 124\repsep 136\repsep 138\repsep 178\repsep 234\repsep 257\repsep 289\repsep 359\repsep 456\repsep 569\repsep 679$ & 1250 \\
$015\repsep 017\repsep 034\repsep 089\repsep 126\repsep 135\repsep 148\repsep 237\repsep 249\repsep 256\repsep 258\repsep 457\repsep 569\repsep 678$ & 1240 \\
$017\repsep 029\repsep 034\repsep 037\repsep 124\repsep 138\repsep 156\repsep 246\repsep 258\repsep 346\repsep 379\repsep 457\repsep 458\repsep 689$ & 1236 \\
$017\repsep 019\repsep 026\repsep 058\repsep 138\repsep 156\repsep 247\repsep 257\repsep 289\repsep 345\repsep 346\repsep 367\repsep 469\repsep 579$ & 1223 \\
$017\repsep 023\repsep 034\repsep 058\repsep 069\repsep 128\repsep 136\repsep 178\repsep 245\repsep 279\repsep 359\repsep 478\repsep 489\repsep 567$ & 1217 \\
$015\repsep 034\repsep 039\repsep 067\repsep 126\repsep 139\repsep 159\repsep 178\repsep 247\repsep 248\repsep 259\repsep 358\repsep 457\repsep 689$ & 1216 \\
$015\repsep 018\repsep 027\repsep 034\repsep 069\repsep 126\repsep 147\repsep 234\repsep 235\repsep 368\repsep 456\repsep 458\repsep 489\repsep 579$ & 1197 \\
$016\repsep 023\repsep 036\repsep 049\repsep 057\repsep 127\repsep 134\repsep 138\repsep 159\repsep 258\repsep 356\repsep 379\repsep 467\repsep 689$ & 1190 \\
$023\repsep 039\repsep 045\repsep 089\repsep 147\repsep 156\repsep 189\repsep 234\repsep 238\repsep 258\repsep 267\repsep 357\repsep 468\repsep 479$ & 820 \\
$029\repsep 038\repsep 045\repsep 123\repsep 148\repsep 159\repsep 189\repsep 246\repsep 278\repsep 356\repsep 359\repsep 378\repsep 379\repsep 456\repsep 689$ & 3072 \\
$019\repsep 025\repsep 034\repsep 057\repsep 123\repsep 128\repsep 136\repsep 145\repsep 237\repsep 389\repsep 478\repsep 489\repsep 567\repsep 569\repsep 689$ & 2448 \\
$025\repsep 028\repsep 047\repsep 128\repsep 129\repsep 136\repsep 148\repsep 156\repsep 179\repsep 236\repsep 359\repsep 378\repsep 469\repsep 568\repsep 678$ & 2281 \\
$016\repsep 028\repsep 034\repsep 037\repsep 079\repsep 135\repsep 189\repsep 237\repsep 249\repsep 258\repsep 259\repsep 459\repsep 468\repsep 567\repsep 569$ & 2276 \\
$017\repsep 028\repsep 029\repsep 069\repsep 123\repsep 137\repsep 146\repsep 248\repsep 256\repsep 289\repsep 347\repsep 378\repsep 389\repsep 459\repsep 578$ & 2256 \\
$012\repsep 017\repsep 038\repsep 057\repsep 069\repsep 134\repsep 167\repsep 237\repsep 245\repsep 268\repsep 278\repsep 356\repsep 456\repsep 478\repsep 589$ & 2087 \\
$025\repsep 039\repsep 046\repsep 078\repsep 124\repsep 127\repsep 169\repsep 234\repsep 357\repsep 368\repsep 458\repsep 467\repsep 479\repsep 579\repsep 589$ & 2062 \\
$014\repsep 026\repsep 039\repsep 078\repsep 079\repsep 137\repsep 158\repsep 167\repsep 169\repsep 234\repsep 268\repsep 279\repsep 459\repsep 467\repsep 568$ & 2044 \\
$018\repsep 027\repsep 036\repsep 045\repsep 125\repsep 129\repsep 134\repsep 157\repsep 235\repsep 349\repsep 378\repsep 468\repsep 479\repsep 589\repsep 789$ & 2044 \\
$012\repsep 016\repsep 027\repsep 056\repsep 089\repsep 129\repsep 135\repsep 147\repsep 234\repsep 345\repsep 367\repsep 389\repsep 459\repsep 478\repsep 568$ & 2026 \\
$012\repsep 029\repsep 037\repsep 049\repsep 068\repsep 124\repsep 147\repsep 159\repsep 239\repsep 256\repsep 278\repsep 346\repsep 358\repsep 367\repsep 789$ & 2026 \\
$016\repsep 023\repsep 049\repsep 057\repsep 069\repsep 127\repsep 148\repsep 238\repsep 246\repsep 289\repsep 347\repsep 357\repsep 359\repsep 368\repsep 679$ & 2026 \\
$012\repsep 036\repsep 058\repsep 079\repsep 139\repsep 157\repsep 168\repsep 179\repsep 237\repsep 245\repsep 289\repsep 369\repsep 457\repsep 469\repsep 478$ & 2019 \\
$017\repsep 039\repsep 058\repsep 059\repsep 126\repsep 135\repsep 138\repsep 145\repsep 234\repsep 235\repsep 259\repsep 278\repsep 467\repsep 489\repsep 579$ & 2017 \\
$025\repsep 049\repsep 127\repsep 135\repsep 146\repsep 189\repsep 234\repsep 269\repsep 346\repsep 357\repsep 368\repsep 379\repsep 478\repsep 567\repsep 678$ & 2002 \\
$017\repsep 024\repsep 038\repsep 059\repsep 067\repsep 139\repsep 148\repsep 156\repsep 235\repsep 237\repsep 247\repsep 257\repsep 379\repsep 467\repsep 689$ & 1998 \\
$013\repsep 027\repsep 034\repsep 049\repsep 058\repsep 128\repsep 147\repsep 156\repsep 237\repsep 239\repsep 289\repsep 357\repsep 459\repsep 468\repsep 789$ & 1996 \\
$017\repsep 037\repsep 039\repsep 058\repsep 069\repsep 124\repsep 138\repsep 169\repsep 236\repsep 345\repsep 457\repsep 458\repsep 489\repsep 579\repsep 678$ & 1996 \\
$015\repsep 027\repsep 046\repsep 068\repsep 089\repsep 124\repsep 169\repsep 178\repsep 249\repsep 268\repsep 347\repsep 359\repsep 367\repsep 456\repsep 458$ & 1991 \\
$014\repsep 037\repsep 046\repsep 059\repsep 068\repsep 127\repsep 135\repsep 146\repsep 169\repsep 179\repsep 239\repsep 245\repsep 368\repsep 479\repsep 578$ & 1983 \\
$034\repsep 035\repsep 056\repsep 078\repsep 124\repsep 135\repsep 169\repsep 234\repsep 236\repsep 257\repsep 289\repsep 379\repsep 457\repsep 459\repsep 467$ & 1969 \\
$014\repsep 017\repsep 034\repsep 059\repsep 129\repsep 145\repsep 156\repsep 168\repsep 234\repsep 238\repsep 256\repsep 379\repsep 478\repsep 678\repsep 679$ & 1897 \\
$017\repsep 024\repsep 036\repsep 037\repsep 079\repsep 089\repsep 123\repsep 168\repsep 179\repsep 236\repsep 257\repsep 378\repsep 458\repsep 467\repsep 569$ & 1874 \\
$035\repsep 048\repsep 067\repsep 129\repsep 136\repsep 139\repsep 157\repsep 237\repsep 245\repsep 268\repsep 389\repsep 479\repsep 569\repsep 578\repsep 689$ & 1813 \\
$018\repsep 045\repsep 057\repsep 067\repsep 068\repsep 127\repsep 135\repsep 146\repsep 236\repsep 289\repsep 349\repsep 358\repsep 456\repsep 458\repsep 679$ & 1759 \\
$012\repsep 026\repsep 038\repsep 057\repsep 127\repsep 134\repsep 139\repsep 169\repsep 245\repsep 258\repsep 269\repsep 356\repsep 468\repsep 479\repsep 589$ & 1729 \\
$027\repsep 035\repsep 039\repsep 045\repsep 069\repsep 125\repsep 148\repsep 149\repsep 168\repsep 234\repsep 367\repsep 389\repsep 458\repsep 579\repsep 678$ & 1702 \\
$018\repsep 023\repsep 024\repsep 037\repsep 058\repsep 067\repsep 127\repsep 136\repsep 145\repsep 179\repsep 249\repsep 357\repsep 389\repsep 468\repsep 569$ & 1690 \\
$017\repsep 027\repsep 034\repsep 056\repsep 057\repsep 123\repsep 139\repsep 158\repsep 169\repsep 178\repsep 246\repsep 259\repsep 357\repsep 368\repsep 489$ & 1684 \\
$017\repsep 018\repsep 035\repsep 089\repsep 125\repsep 126\repsep 148\repsep 169\repsep 234\repsep 268\repsep 279\repsep 367\repsep 456\repsep 459\repsep 578$ & 1683 \\
$012\repsep 018\repsep 029\repsep 057\repsep 125\repsep 137\repsep 157\repsep 169\repsep 236\repsep 348\repsep 359\repsep 389\repsep 456\repsep 479\repsep 678$ & 1674 \\
$018\repsep 025\repsep 026\repsep 035\repsep 037\repsep 078\repsep 089\repsep 139\repsep 146\repsep 157\repsep 234\repsep 279\repsep 368\repsep 458\repsep 569$ & 1673 \\
$016\repsep 027\repsep 029\repsep 038\repsep 045\repsep 127\repsep 134\repsep 169\repsep 235\repsep 268\repsep 369\repsep 457\repsep 479\repsep 567\repsep 589$ & 1668 \\
$016\repsep 029\repsep 037\repsep 048\repsep 127\repsep 129\repsep 149\repsep 235\repsep 346\repsep 389\repsep 457\repsep 478\repsep 568\repsep 579\repsep 589$ & 1663 \\
$018\repsep 025\repsep 028\repsep 037\repsep 046\repsep 136\repsep 149\repsep 159\repsep 239\repsep 245\repsep 267\repsep 358\repsep 478\repsep 569\repsep 579$ & 1656 \\
$016\repsep 024\repsep 035\repsep 079\repsep 127\repsep 147\repsep 156\repsep 159\repsep 238\repsep 239\repsep 346\repsep 457\repsep 489\repsep 568\repsep 789$ & 1654 \\
$012\repsep 035\repsep 047\repsep 058\repsep 078\repsep 138\repsep 139\repsep 157\repsep 236\repsep 247\repsep 289\repsep 368\repsep 459\repsep 468\repsep 679$ & 1646 \\
\bottomrule
\end{tabular}
\end{minipage}\hfill
\begin{minipage}[t]{0.495\textwidth}
\begin{tabular}{lr}
\toprule
representative & $K/2^{10}$ \\
\midrule
$023\repsep 034\repsep 059\repsep 078\repsep 125\repsep 137\repsep 149\repsep 168\repsep 234\repsep 236\repsep 248\repsep 258\repsep 267\repsep 389\repsep 456$ & 1646 \\
$013\repsep 025\repsep 038\repsep 046\repsep 049\repsep 057\repsep 146\repsep 159\repsep 178\repsep 235\repsep 268\repsep 279\repsep 348\repsep 367\repsep 489$ & 1636 \\
$018\repsep 025\repsep 039\repsep 046\repsep 135\repsep 149\repsep 234\repsep 237\repsep 268\repsep 367\repsep 478\repsep 567\repsep 578\repsep 589\repsep 679$ & 1633 \\
$012\repsep 013\repsep 014\repsep 027\repsep 035\repsep 038\repsep 059\repsep 129\repsep 136\repsep 148\repsep 234\repsep 379\repsep 456\repsep 578\repsep 689$ & 1631 \\
$012\repsep 049\repsep 058\repsep 059\repsep 068\repsep 137\repsep 139\repsep 167\repsep 234\repsep 256\repsep 279\repsep 357\repsep 458\repsep 469\repsep 789$ & 1620 \\
$025\repsep 038\repsep 067\repsep 079\repsep 147\repsep 158\repsep 169\repsep 237\repsep 246\repsep 259\repsep 267\repsep 345\repsep 348\repsep 349\repsep 789$ & 1340 \\
$025\repsep 037\repsep 068\repsep 078\repsep 128\repsep 146\repsep 149\repsep 156\repsep 178\repsep 247\repsep 348\repsep 367\repsep 369\repsep 579\repsep 589$ & 1332 \\
$019\repsep 025\repsep 027\repsep 034\repsep 068\repsep 135\repsep 137\repsep 156\repsep 237\repsep 248\repsep 269\repsep 278\repsep 456\repsep 479\repsep 578$ & 1305 \\
$029\repsep 036\repsep 037\repsep 038\repsep 058\repsep 128\repsep 145\repsep 179\repsep 238\repsep 246\repsep 269\repsep 357\repsep 489\repsep 569\repsep 678$ & 1303 \\
$018\repsep 037\repsep 057\repsep 069\repsep 127\repsep 135\repsep 149\repsep 239\repsep 246\repsep 247\repsep 269\repsep 458\repsep 469\repsep 568\repsep 678$ & 1299 \\
$023\repsep 046\repsep 058\repsep 079\repsep 126\repsep 138\repsep 139\repsep 157\repsep 248\repsep 259\repsep 345\repsep 367\repsep 589\repsep 678\repsep 789$ & 1291 \\
$026\repsep 036\repsep 049\repsep 057\repsep 128\repsep 137\repsep 169\repsep 189\repsep 235\repsep 245\repsep 248\repsep 346\repsep 458\repsep 567\repsep 789$ & 1273 \\
$013\repsep 023\repsep 026\repsep 059\repsep 128\repsep 145\repsep 146\repsep 158\repsep 279\repsep 347\repsep 349\repsep 367\repsep 456\repsep 478\repsep 689$ & 1271 \\
$014\repsep 025\repsep 039\repsep 078\repsep 125\repsep 157\repsep 168\repsep 179\repsep 237\repsep 269\repsep 348\repsep 356\repsep 368\repsep 457\repsep 459$ & 1271 \\
$029\repsep 037\repsep 058\repsep 127\repsep 135\repsep 149\repsep 156\repsep 246\repsep 368\repsep 389\repsep 457\repsep 469\repsep 568\repsep 678\repsep 679$ & 1268 \\
$026\repsep 029\repsep 046\repsep 048\repsep 079\repsep 127\repsep 135\repsep 169\repsep 258\repsep 289\repsep 346\repsep 389\repsep 456\repsep 457\repsep 678$ & 1261 \\
$034\repsep 057\repsep 058\repsep 067\repsep 128\repsep 134\repsep 169\repsep 235\repsep 249\repsep 258\repsep 357\repsep 368\repsep 379\repsep 457\repsep 678$ & 1261 \\
$014\repsep 037\repsep 038\repsep 048\repsep 056\repsep 079\repsep 123\repsep 149\repsep 157\repsep 248\repsep 269\repsep 347\repsep 358\repsep 359\repsep 678$ & 1254 \\
$015\repsep 026\repsep 029\repsep 037\repsep 045\repsep 127\repsep 145\repsep 157\repsep 189\repsep 235\repsep 238\repsep 346\repsep 478\repsep 568\repsep 679$ & 1254 \\
$016\repsep 027\repsep 046\repsep 049\repsep 058\repsep 089\repsep 136\repsep 137\repsep 149\repsep 168\repsep 239\repsep 245\repsep 348\repsep 568\repsep 579$ & 1246 \\
$015\repsep 029\repsep 035\repsep 068\repsep 128\repsep 134\repsep 167\repsep 236\repsep 248\repsep 258\repsep 348\repsep 357\repsep 468\repsep 469\repsep 789$ & 1244 \\
$014\repsep 023\repsep 036\repsep 089\repsep 129\repsep 156\repsep 178\repsep 245\repsep 246\repsep 258\repsep 279\repsep 348\repsep 357\repsep 369\repsep 459$ & 1239 \\
$016\repsep 027\repsep 034\repsep 048\repsep 089\repsep 128\repsep 135\repsep 149\repsep 179\repsep 236\repsep 368\repsep 379\repsep 458\repsep 478\repsep 569$ & 1238 \\
$017\repsep 025\repsep 034\repsep 048\repsep 069\repsep 123\repsep 126\repsep 129\repsep 189\repsep 247\repsep 358\repsep 367\repsep 378\repsep 459\repsep 468$ & 1238 \\
$013\repsep 029\repsep 045\repsep 068\repsep 123\repsep 124\repsep 156\repsep 178\repsep 238\repsep 257\repsep 345\repsep 359\repsep 367\repsep 378\repsep 469$ & 1237 \\
$019\repsep 023\repsep 057\repsep 068\repsep 125\repsep 145\repsep 146\repsep 249\repsep 267\repsep 348\repsep 368\repsep 379\repsep 469\repsep 478\repsep 589$ & 1232 \\
$015\repsep 016\repsep 023\repsep 048\repsep 079\repsep 129\repsep 136\repsep 237\repsep 245\repsep 349\repsep 368\repsep 458\repsep 467\repsep 569\repsep 578$ & 1230 \\
$015\repsep 024\repsep 036\repsep 059\repsep 126\repsep 134\repsep 135\repsep 136\repsep 178\repsep 237\repsep 258\repsep 346\repsep 379\repsep 457\repsep 689$ & 1228 \\
$016\repsep 019\repsep 056\repsep 078\repsep 079\repsep 145\repsep 147\repsep 168\repsep 239\repsep 245\repsep 267\repsep 348\repsep 357\repsep 378\repsep 569$ & 1228 \\
$018\repsep 026\repsep 039\repsep 057\repsep 125\repsep 128\repsep 145\repsep 147\repsep 169\repsep 238\repsep 249\repsep 357\repsep 367\repsep 468\repsep 589$ & 1227 \\
$017\repsep 018\repsep 035\repsep 038\repsep 067\repsep 127\repsep 149\repsep 156\repsep 234\repsep 289\repsep 349\repsep 368\repsep 379\repsep 456\repsep 458$ & 1226 \\
$012\repsep 034\repsep 035\repsep 047\repsep 056\repsep 135\repsep 146\repsep 149\repsep 168\repsep 179\repsep 239\repsep 256\repsep 278\repsep 458\repsep 689$ & 1224 \\
$015\repsep 023\repsep 029\repsep 036\repsep 089\repsep 128\repsep 147\repsep 235\repsep 246\repsep 279\repsep 349\repsep 378\repsep 458\repsep 567\repsep 579$ & 1221 \\
$016\repsep 024\repsep 036\repsep 057\repsep 089\repsep 129\repsep 136\repsep 138\repsep 147\repsep 158\repsep 179\repsep 236\repsep 278\repsep 345\repsep 469$ & 1215 \\
$024\repsep 025\repsep 046\repsep 079\repsep 125\repsep 127\repsep 138\repsep 159\repsep 178\repsep 248\repsep 345\repsep 348\repsep 359\repsep 369\repsep 678$ & 1206 \\
$012\repsep 045\repsep 046\repsep 078\repsep 079\repsep 125\repsep 149\repsep 168\repsep 234\repsep 256\repsep 278\repsep 357\repsep 369\repsep 589\repsep 789$ & 1205 \\
$013\repsep 058\repsep 059\repsep 079\repsep 124\repsep 126\repsep 148\repsep 156\repsep 234\repsep 289\repsep 357\repsep 378\repsep 467\repsep 567\repsep 569$ & 1199 \\
$017\repsep 034\repsep 089\repsep 125\repsep 139\repsep 149\repsep 236\repsep 248\repsep 279\repsep 357\repsep 359\repsep 456\repsep 467\repsep 479\repsep 568$ & 1195 \\
$026\repsep 034\repsep 057\repsep 089\repsep 123\repsep 145\repsep 146\repsep 148\repsep 158\repsep 179\repsep 258\repsep 369\repsep 378\repsep 467\repsep 689$ & 1191 \\
$014\repsep 036\repsep 057\repsep 089\repsep 124\repsep 139\repsep 169\repsep 178\repsep 238\repsep 256\repsep 345\repsep 356\repsep 379\repsep 467\repsep 479$ & 1165 \\
$019\repsep 046\repsep 048\repsep 057\repsep 124\repsep 145\repsep 156\repsep 168\repsep 235\repsep 267\repsep 349\repsep 378\repsep 389\repsep 568\repsep 589$ & 778 \\
$013\repsep 024\repsep 027\repsep 036\repsep 057\repsep 124\repsep 128\repsep 149\repsep 267\repsep 359\repsep 378\repsep 456\repsep 459\repsep 678\repsep 689$ & 771 \\
$015\repsep 018\repsep 029\repsep 059\repsep 069\repsep 078\repsep 127\repsep 138\repsep 149\repsep 268\repsep 346\repsep 367\repsep 379\repsep 458\repsep 567$ & 767 \\
$017\repsep 025\repsep 036\repsep 058\repsep 078\repsep 089\repsep 125\repsep 138\repsep 149\repsep 245\repsep 247\repsep 248\repsep 267\repsep 359\repsep 567\repsep 789$ & 2354 \\
$013\repsep 023\repsep 024\repsep 059\repsep 078\repsep 128\repsep 135\repsep 146\repsep 157\repsep 236\repsep 279\repsep 347\repsep 368\repsep 378\repsep 457\repsep 458$ & 2322 \\
$018\repsep 024\repsep 035\repsep 036\repsep 057\repsep 126\repsep 127\repsep 135\repsep 189\repsep 238\repsep 267\repsep 268\repsep 289\repsep 379\repsep 469\repsep 478$ & 2063 \\
$019\repsep 026\repsep 046\repsep 058\repsep 124\repsep 125\repsep 147\repsep 148\repsep 168\repsep 259\repsep 267\repsep 278\repsep 348\repsep 349\repsep 357\repsep 789$ & 2027 \\
$018\repsep 049\repsep 057\repsep 089\repsep 135\repsep 146\repsep 169\repsep 237\repsep 246\repsep 247\repsep 259\repsep 289\repsep 346\repsep 368\repsep 478\repsep 568$ & 1868 \\
$017\repsep 029\repsep 035\repsep 046\repsep 139\repsep 156\repsep 167\repsep 189\repsep 234\repsep 258\repsep 267\repsep 368\repsep 379\repsep 459\repsep 469\repsep 578$ & 1795 \\
$024\repsep 039\repsep 057\repsep 068\repsep 128\repsep 135\repsep 149\repsep 178\repsep 238\repsep 239\repsep 247\repsep 267\repsep 367\repsep 458\repsep 569\repsep 579$ & 1721 \\
$023\repsep 048\repsep 059\repsep 067\repsep 123\repsep 129\repsep 145\repsep 168\repsep 178\repsep 237\repsep 258\repsep 346\repsep 349\repsep 356\repsep 456\repsep 789$ & 1708 \\
$029\repsep 034\repsep 056\repsep 078\repsep 079\repsep 134\repsep 138\repsep 146\repsep 179\repsep 237\repsep 245\repsep 268\repsep 359\repsep 456\repsep 459\repsep 489$ & 1695 \\
$015\repsep 037\repsep 056\repsep 068\repsep 124\repsep 169\repsep 178\repsep 189\repsep 235\repsep 279\repsep 348\repsep 369\repsep 379\repsep 456\repsep 459\repsep 578$ & 1690 \\
$023\repsep 029\repsep 036\repsep 047\repsep 058\repsep 089\repsep 128\repsep 134\repsep 156\repsep 179\repsep 249\repsep 268\repsep 358\repsep 367\repsep 457\repsep 678$ & 1685 \\
$017\repsep 019\repsep 034\repsep 046\repsep 059\repsep 068\repsep 124\repsep 127\repsep 158\repsep 238\repsep 267\repsep 289\repsep 357\repsep 369\repsep 456\repsep 478$ & 1683 \\
$023\repsep 027\repsep 036\repsep 058\repsep 079\repsep 126\repsep 138\repsep 147\repsep 148\repsep 149\repsep 235\repsep 246\repsep 249\repsep 457\repsep 569\repsep 789$ & 1683 \\
$012\repsep 037\repsep 046\repsep 059\repsep 124\repsep 145\repsep 169\repsep 178\repsep 239\repsep 248\repsep 249\repsep 267\repsep 356\repsep 378\repsep 389\repsep 479$ & 1681 \\
$016\repsep 028\repsep 029\repsep 035\repsep 037\repsep 048\repsep 123\repsep 124\repsep 139\repsep 159\repsep 258\repsep 269\repsep 346\repsep 378\repsep 479\repsep 567$ & 1680 \\
$019\repsep 047\repsep 056\repsep 089\repsep 126\repsep 128\repsep 134\repsep 159\repsep 235\repsep 236\repsep 278\repsep 357\repsep 369\repsep 379\repsep 458\repsep 459$ & 1680 \\
$014\repsep 023\repsep 029\repsep 035\repsep 039\repsep 067\repsep 158\repsep 179\repsep 234\repsep 238\repsep 246\repsep 369\repsep 378\repsep 459\repsep 478\repsep 567$ & 1676 \\
$012\repsep 038\repsep 048\repsep 059\repsep 067\repsep 068\repsep 139\repsep 146\repsep 167\repsep 234\repsep 258\repsep 267\repsep 357\repsep 378\repsep 459\repsep 479$ & 1670 \\
$014\repsep 046\repsep 056\repsep 078\repsep 123\repsep 127\repsep 159\repsep 168\repsep 235\repsep 249\repsep 347\repsep 368\repsep 389\repsep 467\repsep 468\repsep 679$ & 1668 \\
$023\repsep 028\repsep 035\repsep 047\repsep 069\repsep 125\repsep 139\repsep 158\repsep 167\repsep 237\repsep 249\repsep 269\repsep 346\repsep 379\repsep 458\repsep 789$ & 1667 \\
$015\repsep 019\repsep 025\repsep 036\repsep 124\repsep 146\repsep 168\repsep 189\repsep 237\repsep 249\repsep 279\repsep 345\repsep 389\repsep 478\repsep 568\repsep 679$ & 1650 \\
$023\repsep 047\repsep 048\repsep 089\repsep 126\repsep 127\repsep 128\repsep 135\repsep 148\repsep 257\repsep 259\repsep 289\repsep 368\repsep 379\repsep 469\repsep 567$ & 1648 \\
$012\repsep 014\repsep 034\repsep 036\repsep 057\repsep 089\repsep 125\repsep 146\repsep 157\repsep 178\repsep 237\repsep 248\repsep 269\repsep 359\repsep 456\repsep 479$ & 1643 \\
$019\repsep 023\repsep 034\repsep 035\repsep 048\repsep 067\repsep 136\repsep 178\repsep 245\repsep 258\repsep 268\repsep 278\repsep 279\repsep 347\repsep 469\repsep 589$ & 1642 \\
$013\repsep 026\repsep 029\repsep 057\repsep 089\repsep 123\repsep 158\repsep 167\repsep 169\repsep 189\repsep 239\repsep 245\repsep 278\repsep 346\repsep 479\repsep 678$ & 1641 \\
$018\repsep 029\repsep 034\repsep 056\repsep 069\repsep 124\repsep 126\repsep 137\repsep 169\repsep 258\repsep 267\repsep 359\repsep 468\repsep 489\repsep 579\repsep 789$ & 1629 \\
$014\repsep 019\repsep 026\repsep 039\repsep 046\repsep 058\repsep 129\repsep 137\repsep 138\repsep 159\repsep 237\repsep 245\repsep 289\repsep 469\repsep 478\repsep 567$ & 1627 \\
$018\repsep 034\repsep 056\repsep 068\repsep 079\repsep 127\repsep 146\repsep 159\repsep 169\repsep 179\repsep 236\repsep 248\repsep 257\repsep 389\repsep 459\repsep 578$ & 1616 \\
$012\repsep 035\repsep 037\repsep 069\repsep 136\repsep 157\repsep 169\repsep 189\repsep 245\repsep 279\repsep 349\repsep 357\repsep 368\repsep 379\repsep 478\repsep 568$ & 1613 \\
$019\repsep 027\repsep 036\repsep 048\repsep 126\repsep 137\repsep 146\repsep 158\repsep 234\repsep 256\repsep 359\repsep 368\repsep 379\repsep 389\repsep 457\repsep 789$ & 1609 \\
$016\repsep 048\repsep 057\repsep 067\repsep 123\repsep 128\repsep 178\repsep 245\repsep 249\repsep 268\repsep 269\repsep 345\repsep 356\repsep 389\repsep 459\repsep 479$ & 1279 \\
$023\repsep 037\repsep 046\repsep 057\repsep 089\repsep 129\repsep 134\repsep 168\repsep 189\repsep 236\repsep 257\repsep 268\repsep 357\repsep 358\repsep 478\repsep 569$ & 1255 \\
$014\repsep 058\repsep 079\repsep 125\repsep 126\repsep 128\repsep 135\repsep 146\repsep 178\repsep 237\repsep 249\repsep 345\repsep 369\repsep 378\repsep 468\repsep 567$ & 1246 \\
$013\repsep 048\repsep 058\repsep 079\repsep 136\repsep 149\repsep 157\repsep 167\repsep 234\repsep 237\repsep 256\repsep 278\repsep 359\repsep 367\repsep 579\repsep 689$ & 1240 \\
$016\repsep 035\repsep 037\repsep 048\repsep 079\repsep 124\repsep 138\repsep 157\repsep 169\repsep 246\repsep 256\repsep 278\repsep 279\repsep 289\repsep 349\repsep 467$ & 1236 \\
$012\repsep 018\repsep 057\repsep 069\repsep 079\repsep 156\repsep 159\repsep 189\repsep 234\repsep 258\repsep 356\repsep 379\repsep 467\repsep 489\repsep 589\repsep 678$ & 1216 \\
$013\repsep 019\repsep 025\repsep 048\repsep 135\repsep 147\repsep 168\repsep 236\repsep 239\repsep 245\repsep 289\repsep 346\repsep 358\repsep 379\repsep 567\repsep 679$ & 1208 \\
$023\repsep 048\repsep 049\repsep 056\repsep 079\repsep 128\repsep 136\repsep 137\repsep 159\repsep 179\repsep 246\repsep 249\repsep 257\repsep 358\repsep 467\repsep 678$ & 1195 \\
$017\repsep 037\repsep 038\repsep 046\repsep 056\repsep 124\repsep 139\repsep 168\repsep 246\repsep 258\repsep 269\repsep 345\repsep 367\repsep 378\repsep 489\repsep 579$ & 1189 \\
$013\repsep 018\repsep 029\repsep 067\repsep 123\repsep 138\repsep 156\repsep 179\repsep 235\repsep 245\repsep 247\repsep 256\repsep 345\repsep 389\repsep 469\repsep 578$ & 1183 \\
$015\repsep 019\repsep 026\repsep 039\repsep 048\repsep 127\repsep 169\repsep 235\repsep 239\repsep 249\repsep 347\repsep 368\repsep 379\repsep 456\repsep 457\repsep 589$ & 861 \\
$029\repsep 035\repsep 047\repsep 068\repsep 125\repsep 127\repsep 146\repsep 148\repsep 179\repsep 267\repsep 348\repsep 357\repsep 369\repsep 378\repsep 457\repsep 459$ & 833 \\
$025\repsep 049\repsep 078\repsep 089\repsep 127\repsep 138\repsep 145\repsep 159\repsep 169\repsep 234\repsep 359\repsep 367\repsep 379\repsep 467\repsep 468\repsep 589$ & 762 \\
$035\repsep 046\repsep 089\repsep 124\repsep 135\repsep 136\repsep 147\repsep 148\repsep 158\repsep 179\repsep 237\repsep 256\repsep 259\repsep 348\repsep 457\repsep 678$ & 743 \\
$015\repsep 024\repsep 079\repsep 128\repsep 137\repsep 169\repsep 179\repsep 235\repsep 236\repsep 267\repsep 348\repsep 456\repsep 468\repsep 589\repsep 678\repsep 679$ & 451 \\
$015\repsep 025\repsep 026\repsep 034\repsep 078\repsep 128\repsep 147\repsep 236\repsep 245\repsep 279\repsep 357\repsep 369\repsep 456\repsep 468\repsep 579\repsep 589$ & 413 \\
\phantom{}\\
\bottomrule
\end{tabular}
\end{minipage}
\end{table*}